\newtheorem{theorem}{Theorem}
\newtheorem{proof}{Proof}
\newcommand*{\SavedEqref}{}
\let\SavedEqref\eqref
\renewcommand*{\eqref}[1]{%
	\begingroup
	\hypersetup{
		linkcolor=linkequation,
		linkbordercolor=linkequation,
	}%
	\SavedEqref{#1}%
	\endgroup
}
\DeclareMathOperator{\Tr}{Tr}
\begin{document}
	\title{Probabilistic Pathways: New Frontiers in Quantum Ensemble Control}
	
	\author{ Randa Herzallah\thanks{\textcolor{blue}{randa.herzallah@warwick.ac.uk}}\\Warwick Mathematics Institute,
		The University of Warwick, Coventry, CV4 7AL, UK.\\
		\\
		Abdessamad Belfakir\thanks{\textcolor{blue}{abdobelfakir01@gmail.com}}\\School of Physics and Astronomy, University of Nottingham, Nottingham, NG7 2RD, UK}

	\date{\vspace{-5ex}}
	\maketitle
	\begin{abstract}
In this paper, we propose a novel probabilistic control framework for efficiently controlling an ensemble of quantum systems that can also compensate for the interaction of the systems with the external environment. The main challenge in this problem is to simultaneously steer an ensemble of systems with variation in their internal parameters from an initial state to a desired final state. The minimisation of the discrepancy between the probabilistic description of the dynamics of a quantum ensemble and a predefined desired probabilistic description is the key step in the proposed framework. With this objective, the derived solution will not only allow the transitioning of the ensemble from one state to another, but will generally allow steering an initial distribution of the ensemble to a final distribution. Numerical results are presented, demonstrating the effectiveness of the proposed probabilistic control framework.

	\end{abstract}
	{\quotation\noindent{\bf Keywords:}  Probabilistic control, 
		Kullback-Leibler Divergence (KLD), quantum control.
		
		\endquotation}
	
\section{Introduction }\label{sec1}
Designing and implementing control solutions to simultaneously steer a large quantum ensemble between states of interest is a critical aspect of many advanced quantum technologies. These technologies include laser cooling, nuclear magnetic resonance spectroscopy, magnetic resonance imaging, quantum computation \cite{Cory}, and long-distance quantum communication \cite{Duan}. Quantum ensembles consist of a vast number of quantum systems (e.g., spin systems \cite{Bensky,Li,Glaser,Altafini2}) with variations in their parameters. The primary challenge in controlling ensembles is to derive a common external control field capable of concurrently evolving each system in the ensemble from an initial state to a desired final state, by adjusting global parameters of the overall system, rather than controlling individual members.

Control strategies for quantum ensembles have been successfully applied to molecular ensembles \cite{Turinici} and have proven effective in quantum information processing \cite{Khaneja}. Recent studies have focused on controlling inhomogeneous spin ensembles and stabilizing spin system ensembles using Lyapunov control methodology \cite{Beauchard}. Ensemble controllability has been discussed in \cite{Li09}, emphasizing the importance of Lie brackets and noncommutativity in designing compensating controls. Further research has investigated the development of unitary control in homogeneous quantum ensembles for maximizing signal intensity in coherent spectroscopy \cite{Glaser} and employing the sampling-based learning control (SLC) approach \cite{spin,Chen17}.

Environmental decoherence, a crucial aspect of quantum mechanics, arises from the unavoidable interaction between a system and its environment. While extensively studied in controlling and manipulating quantum systems \cite{Jirari,Cui,Hohenester,Gordon,Grace,Grigorenko}, the issue of environmental decoherence has not yet been adequately addressed within the context of quantum ensembles. Decoherence poses a significant challenge for quantum control development, as it demands precise control of system dynamics and results from the interaction between a quantum system and its noisy environment \cite{Zurek,Joos,Schlosshauer}. Protecting quantum system interferences forms the foundation of quantum information applications \cite{Nielsen}. Consequently, the ability to control and manage environmentally induced decoherence would provide substantial benefits for controlling numerous physical and chemical phenomena. Prior work on quantum optimal control in the presence of decoherence has utilized adaptive feedback control methods \cite{Brif_2001} and control optimization algorithms to maximize population transfer fidelity or quantum gates. Genetic algorithms have been proposed to control and suppress decoherence effects \cite{Zhu_2003}, and experimental control of decoherence in molecular vibrational wave packets has been conducted in the laboratory \cite{Branderhorst_2008}.

Given the current state of the art in controlling quantum ensembles, there is an urgent need to develop more accurate control methods that enable the simultaneous steering of ensemble members to the same desired state in the presence of uncertainty or field noise introduced by the system's Hamiltonian or its interaction with the environment. One highly efficient control method for uncertain stochastic classical control systems is the fully probabilistic control approach \cite{Karny_1,RH_2011,RH_2013,Karny_2}. This approach characterizes the generative models of classical system dynamics using probabilistic descriptions. A control law is then derived to minimize the discrepancy between the probabilistic description of the joint distribution of the system dynamics and its controller and a predefined desired joint distribution \cite{RH_2011,RH_2013}. This method has been demonstrated on various classical systems and has proven effective in deriving accurate control laws under high levels of uncertainty and stochasticity \cite{RH_2015,RH_2018,RH_2020}. Consequently, extending and adapting the fully probabilistic control method to control quantum systems is an appealing prospect.

In this paper, we focus on controlling ensembles of finite-dimensional quantum systems that interact with external environments using the fully probabilistic approach. Our goal is to develop a probabilistic control framework, building upon its classical counterpart, capable of accurately steering the members of a quantum ensemble to a desired state value while considering the interaction of the quantum ensemble with the external environment in the optimal control law design.

Extending the fully probabilistic control approach to the quantum mechanical regime presents numerous challenges. A significant challenge is the coupling of a quantum system with its external environment \cite{Kraus,Breuer,Gardiner}. To address this issue, we employ the Lindblad master equation formalism. This formalism models the dynamics of open quantum systems, providing a framework that incorporates the effects of environmental decoherence and dissipation into the system's evolution \cite{Kraus,Neumann}. Another hurdle is the complex nature of quantum system state variables and operators. We tackle this by characterizing generative probabilistic models based on complex probability density functions, crucial for accurately capturing the unique features of quantum systems and ensuring the effectiveness of our probabilistic control framework. A key measure of success in our proposed control methodology is the ability to guide each member of a quantum ensemble to a predetermined target state. Achieving this necessitates the measurement of the final state of the quantum systems in the ensemble and its comparison with the desired state. However, measuring a quantum state is not as straightforward as in classical systems. Quantum systems, defined by their wave functions or density operators, are generally not directly observable. Instead, one must perform a series of measurements on many identically prepared quantum systems to estimate their state, a process known as quantum state tomography. This process aids in estimating the system's density matrix. Intricately tied to this measurement process is the phenomenon of measurement backaction, another significant challenge. This unique quantum trait, where the act of measurement can alter the state of the system, could influence the post-control state of the system, potentially affecting the control methodology's performance. While intriguing, we opt not to consider measurement backaction in this paper to maintain the simplicity of our presentation of the proposed fully probabilistic approach. However, it's worth noting that a similar approach to environmental interaction could be employed to account for measurement backaction \cite{Brun,Doherty}.

Consequently, our primary focus is on the mathematical formulation of fully probabilistic control within quantum settings, as well as the analysis of a class of bilinear quantum control systems. We aim to establish a rigorous probabilistic control framework for quantum ensembles that takes into account environmental decoherence, uncertainty, and the complex nature of quantum state variables. This framework will facilitate the development of more robust and accurate control methods for a variety of quantum technologies. By presenting our approach in a simplified manner and systematically addressing the key challenges, we provide a solid foundation for future work that could extend our proposed approach to include the effects of measurement backaction and other complex phenomena.

%

This paper is organised as follows: In Section (\ref{Evolution of open quantum systems}) we briefly recall the evolution of open quantum systems and develop  the corresponding state space model using the vectorization of their corresponding density operator. In Section (\ref{Fully Probabilistic Control for Quantum Systems}), we introduce a general theory to fully control  quantum systems affected by state dependent noisy environment  using a probabilistic approach and demonstrate its general solution. Then, we apply the developed approach   in Section (\ref{QuantumCA}) to systems described by Gaussian pdfs. In Section (\ref{Results and discussions}) we apply the method to control an ensemble of particular systems and show the effectiveness and the applicability of the method. Finally, we conclude  in Section (\ref{conclu}).

\section{Problem formulation and Quantum systems models}\label{Evolution of open quantum systems}
There are numerous dynamical models describing quantum systems interacting with external environments, depending on the type of system-environment coupling \cite{Kraus,Breuer,Gardiner}. 
An alternative approach to describe the time evolution of the open quantum system is through master equations, the form of which depends on the character of the system-environment interaction. Under the Markovian assumption, an open quantum system can be described by the following master equation:
\begin{equation}\label{LVN_eq}
i \dfrac{d\rho(t)}{d t} =[H,\rho(t)]+\mathcal{L}(\rho(t)), \hspace*{0,2cm} \rho(0)=\rho_0,
\end{equation}
where $H$ is the Hamiltonian of the system and $\mathcal{L}[\rho(t)]$ is the open system super-operator. In this paper, we set $\hbar=1$ and consider Hamiltonians of the form:
\begin{equation}
\label{Hamil}
H=H_0+ u(t)H_1,
\end{equation}
where $H_0$ is the Hamiltonian of the isolated free system, and the term $u(t)H_1$ is associated with the time-dependent control Hamiltonian defined through the interaction of the system with the external field $u(t)$. In the Lindblad approach, the open system Liouvillian is:
\begin{equation}\label{Linbad}
\mathcal{L}(\rho(t))=i\sum_{s}\big(L_s\rho(t) L_s^\dagger-\dfrac{1}{2}\{L_s^\dagger L_s,\rho(t)\}\big),
\end{equation}
where $L_s$ are Lindblad operators and $L_s^\dagger$ is the adjoint operator of $L_s$. The time evolution of a physical property described by a Hermitian operator $\hat{o}$ is given by:
\begin{equation}\label{obervable_evolution}
{o}(t)=\braket{\hat{o}}=\Tr(\rho(t)\hat{o}).
\end{equation}

Given the definition of the master equation (\ref{LVN_eq}), we will now demonstrate that the open quantum system defined by this master equation can be equivalently described using a time-dependent equation that involves a state vector representing the vectorization of the density operator. For this purpose, the free Hamiltonian $H_0$ can be expanded element-wise as follows:
\begin{equation}
H_0=\sum_{k=0}^{l-1}E_k\ket{k}\bra{k},
\end{equation}
where $E_k$ is the $k$th eigenvalue of $H_0$ associated with the eigenvector $\ket{k}$, with $k \in {0,1,\dots,l-1}$. Similarly, the Lindblad operators appearing in (\ref{Linbad}) can be written element-wise as follows:
\begin{equation}\label{Gammas}
L_s=L_{j,k}=\sqrt{\Theta_{k\to j}}\ket{j}\bra{k},
\end{equation}
where $\Theta_{k\to j}$ is the dissipative transition rate from the eigenstate $\ket{k}$ to the eigenstate $\ket{j}$. Following this element-wise presentation, the master equation (\ref{LVN_eq}) can be equivalently written as \cite{Breuer}:
\begin{equation}\label{rho_elements}
\dfrac{d\rho_{n,q}(t)}{d t}=(-i{(E_n-E_q)}-\gamma_{n,q})\rho_{n,q}(t)+\sum_{k=0}^{l-1}\Theta_{k\to n}\rho_{k,k}(t)\delta_{n,q}+i{u(t)}\sum_{k=0}^{l-1}\bigg(\rho_{n,k}(t)\bra{k}H_1\ket{q}-\bra{n}H_1\ket{k}\rho_{k,q}(t)\bigg),
\end{equation}
where,
\begin{equation}
\gamma_{n,q}:=\dfrac{1}{2}\sum_{j=0}^{l-1}(\Theta_{n\to j}+\Theta_{q\to j}),
\end{equation}
for $n,q={0,1\dots,l-1}$. Using (\ref{rho_elements}), the vectorization $\tilde{x}$ of the density operator $\rho(t)$ provided in Appendix (\ref{vect_app}) is shown to satisfy the following differential equation:
\begin{equation}\label{NLVN_pa}
\dfrac{d\tilde{x}(t)}{d t} =(\tilde{A}+iu(t) \tilde{N})\tilde{x}(t),\hspace{0.5cm} \tilde{x}(0)=\tilde{x}_0,
\end{equation}
where the matrix elements of the operators $\tilde{A} \in\mathbf{C}^{l^2 \times l^2}$ and $\tilde{N} \in\mathbf{C}^{l^2\times l^2}$ can be easily found from (\ref{rho_elements}). The vectorization of the initial density operator $\rho_0$ is given by $\tilde{x}_0$ \cite{wavepack1,redu}. Considering the following transformation:
\begin{equation}
x(t)\to\tilde{x}(t)-x_e,
\end{equation}
the time evolution equation (\ref{NLVN_pa}) can be rewritten as follows:
\begin{equation}\label{state_space}
\dfrac{dx({t})}{d t}=\tilde{A} x({t})+\tilde{B}(x(t)) u(t) ,\hspace{0.5cm} x(0)=\tilde{x}_0-x_e,
\end{equation}
where $x_e$ is an eigenvector of $\tilde{A}$ defined by $\tilde{A}x_e=0$ and $\tilde{B}(x(t)) =i\tilde{N}(x(t)+x_e)$. This equation, known as the input equation, describes the dependence of the system's dynamics, represented by $x(t)$, on the input electric field ${u(t)}$ \cite{redu}. Thus, by solving the ordinary differential equations (\ref{state_space}), we can accurately determine the evolution of the open quantum system. The solution to (\ref{state_space}) can be expressed as:
\begin{align}\label{StateQua0}
x(t+1) = A x(t) + B(x(t)) u(t),
\end{align}
where the matrices $A$ and $B(x({t}))$ are defined as:
\begin{equation}\label{A_B_t}
A = e^{\tilde{A} \Delta t}, \hspace{0.5cm}\text{and}\hspace{0.5cm}
B(x({t}))= \bigg( \int_0^{\Delta t} e^{\tilde{A} \lambda} \tilde{B}(x(\lambda)) \mathrm{d} \lambda\bigg),
\end{equation}
with $\lambda = \Delta t - t$ and $\Delta t$ representing the sampling period. Equivalently, relation (\ref{StateQua0}) can be rewritten as:
\begin{align}\label{equation_15}
x({t} )= A x({t-1}) + B(x({t-1})) u(t-1).
\end{align}
In practice, control of quantum physical systems can be achieved through the use of multiple electric fields in the control Hamiltonian within the von Neumann equation \cite{spin}. By denoting $x_t\equiv{x}(t)$ and $u(t)\equiv u_t$, the relation (\ref{equation_15}) can be reformulated as:
\begin{align}\label{StateQua}
x_{t} = A x_{t-1} + B(x_{t-1}) u_{t-1}.
\end{align}
In this work, we aim to control an ensemble of inhomogeneous quantum systems, each characterized by a unique Hamiltonian due to the dispersion in the parameters that describe them. This inhomogeneity or dispersion results in each system responding slightly differently to the same control signal, a challenge that needs to be addressed in control problems. To account for this parameter dispersion and its impact on the state evolution of the systems in the ensemble, we introduce multiplicative noise into our model. Unlike additive noise, multiplicative noise varies with the state of the system, making it a suitable choice for modeling uncertainties or variations that scale with the system's state. Therefore, we add the multiplicative noise term, $\eta(x_{t-1})$, to the discretized equation (\ref{StateQua}). The process or measurement noise levels in $\eta(x_{t-1})$ depend on the system state vector, resulting in a state-dependent noise term. The final form of our model is given by:
\begin{align}\label{Inpu2NoiseDelay}
x_{t} = A x_{t-1} + B(x_{t-1}) u_{t-1} + \eta(x_{t-1}).
\end{align}
where,
\begin{equation}\label{eta} \eta(x_{t-1}) =\zeta_t Ax_{t-1},
\end{equation}
and $\zeta_t$ is a scalar noise. This modified equation (\ref{Inpu2NoiseDelay}) incorporates the effects of multiplicative noise and thus accommodates the parameter dispersion inherent in our ensemble of inhomogeneous quantum systems. By employing a vectorized description of the density matrix, we can now reformulate the time evolution of the observable $\hat{o}$, given in (\ref{obervable_evolution}), in terms of the vectorized density operator. These adaptations to the model, both in terms of multiplicative noise and the vectorized description of the density matrix, allow us to better capture the unique dynamics of each member of the quantum ensemble, enhancing the effectiveness of our control methodology. To further improve the precision of our model, we consider additional sources of noise and uncertainties that could affect the observed values $o_t$. This consideration leads to the inclusion of the noise term $\sigma_t$ in our output equation:
\begin{equation}\label{outp}
o_t=Dx_t+\sigma_t.
\end{equation}
Here, $D=(\text{vec}(\hat{o}^T))^T$ represents the matrix obtained by transposing the vectorized version of the observable operator $\hat{o}$, with "vec" denoting the vectorization operator and $\hat{o}^T$ being the transpose of $\hat{o}$. The term $\sigma_t$ is a multivariate Gaussian noise, introduced to model various sources of noise and uncertainties, such as measurement error, environmental disturbances, and process noise \cite{Doherty}. The integration of this term helps enhance the realism and robustness of our model, thereby improving its predictive accuracy and control performance.

By defining the quantum system's dynamics through both the input (\ref{Inpu2NoiseDelay}) and output (\ref{outp}) equations, we establish a bilinear state-space model that effectively characterizes the behavior of the quantum systems in our ensemble. This comprehensive model provides a strong foundation for the subsequent design of effective control strategies.

A key innovation introduced in this paper lies in the extension of the bilinear state-space model, originally formulated in previous works \cite{Doherty,wavepack1,redu}, to incorporate both multiplicative and additive Gaussian noise. By transforming the quantum open system's evolution, typically described by the von Neumann-Lindblad equation, into this enhanced state-space model, we provide a novel representation that not only offers a more comprehensive picture of the quantum system's dynamics but also explicitly accounts for various sources of uncertainties such as manufacturing differences, different environmental conditions, the inherent uncertainty in quantum mechanics, sampling, parameter, and functional uncertainties. These are captured by the inclusion of noise terms in both the input and output equations. This innovative formulation serves as the foundation of the fully probabilistic control framework for quantum systems proposed in this paper, setting the stage for the introduction of advanced control strategies in the following sections.

In the sections to follow, we will demonstrate how utilizing this bilinear state-space model with noise terms enables the development of efficient control strategies within a fully probabilistic framework. This approach provides a robust characterization of the system's time evolution, specifically tailored for quantum systems, and allows for enhanced control and management of their dynamics. Given the precision and robustness of this methodology, it holds significant potential for a wide range of applications in quantum information processing, quantum communication, and quantum sensing. In these domains, precise control and estimation of quantum states are of paramount importance \cite{Ryan,Khaneja2}..

\section{Fully Probabilistic Control for Quantum Systems}\label{Fully Probabilistic Control for Quantum Systems}
Building upon the stochastic bilinear state-space model formulated in the previous section, we now present the control objective and introduce the general framework for the fully probabilistic control of ensembles of quantum systems. This innovative framework is designed to account for various sources of uncertainty and stochasticity affecting the quantum systems in a constructive manner, ultimately yielding more accurate control laws under these challenging conditions.

\subsection{Objectives for the Fully Probabilistic Control of Quantum Systems}
As previously discussed, this paper focuses on controlling ensembles of quantum systems, where individual elements are subject to dispersion in their parameters. These uncertainties are incorporated into the stochastic bilinear state-space model represented by equations (\ref{Inpu2NoiseDelay}) and (\ref{outp}). Consequently, the state of the system at each time step can be represented by the following conditional probability density function (pdf):
\begin{equation}
\label{s_x_t}
s(x_{t}\left| {x_{t - 1}},u_{t-1} \right.).
\end{equation}
This representation provides a comprehensive description of the current state of the system $x_t$, as influenced by the previous states of the system $x_{t-1}$ and electric field $u_{t-1}$. Similarly, due to the presence of the noise term $\sigma_t$ in equation (\ref{outp}), the state of the measurement $o_t$ can be characterized by the following pdf:
\begin{align}
\label{s_o_t}
s({o_t}\left| {{x_{t}}} \right.).
\end{align}
Armed with these conditional pdfs, the objective of the fully probabilistic control can be stated as designing a controller's pdf, $c(u_t|x_t)$, that minimizes the Kullback-Leibler divergence (KLD) between the joint pdf of the closed-loop system dynamics, $f(\mathcal{Z}(t, \mathcal{H}))$, and a pre-specified ideal joint pdf, $^If(\mathcal{Z}(t, \mathcal{H}))$:
\begin{equation}\label{KLD}
\mathcal{D}(f||^I f)=\int f(\mathcal{Z}(t, \mathcal{H}))\ln\big(\dfrac{f(\mathcal{Z}(t, \mathcal{H}))}{^If(\mathcal{Z}(t, \mathcal{H}))}\big)d\mathcal{Z}(t, \mathcal{H}),
\end{equation}
where
\begin{equation}\label{JointDist}
f(\mathcal{Z}(t, \mathcal{H})) = \prod_{t=1}^\mathcal{H} s(x_t|x_{t-1},u_{t-1})s(o_t|x_{t})c(u_{t-1}|x_{t-1}),
\end{equation}
and
\begin{equation}\label{IdealJointDist}
^If(\mathcal{Z}(t, \mathcal{H}))=\prod_{t=1}^\mathcal{H} {^Is(x_t|x_{t-1},u_{t-1})} {^Is(o_t|x_{t})}{^Ic(u_{t-1}|x_{t-1})}.
\end{equation}
In these equations, $^Is(x_t|x_{t-1},u_{t-1})$ denotes the ideal pdf of the state vector $x_t$, $^Is(o_t|x_{t})$ denotes the ideal pdf of the measurement $o_t$, $\mathcal{Z}(t, \mathcal{H})=\{x_t, \dots, x_\mathcal{H}, o_t, \dots, o_\mathcal{H},u_{t-1}, \dots, u_\mathcal{H}\}$ is the closed-loop observed data sequence and $ \mathcal{H} < \infty$ is a given control horizon, and $^Ic(u_{t-1}|x_{t-1})$ represents the ideal pdf of the controller. Given that the state $x_t$ is not directly observable and that the output pdf is conditioned on the state $x_t$ of the quantum system, it is reasonable to allow the state of the quantum system to evolve naturally. Consequently, we assume that $^Is(x_t|x_{t-1},u_{t-1}) = s(x_t|x_{t-1},u_{t-1})$. This assumption simplifies the problem and focuses the controller design on accounting for uncertainties in the measurements and control inputs, rather than attempting to manipulate the natural evolution of the quantum system. By doing so, the proposed fully probabilistic control framework remains compatible with the inherent characteristics of quantum systems while effectively addressing the challenges associated with uncertainties and parameter dispersion in ensembles of quantum systems.

Previous works, such as those cited in \cite{Karny_1,RH_2011,RH_2013}, have demonstrated that the minimization of the KLD equation (\ref{KLD}) with respect to the control input is achieved by initially defining $-\ln(\gamma(x_{t-1}))$ as the expected cost-to-go function. In the context of this work, we express $-\ln(\gamma(x_{t-1}))$ as follows:
\begin{align}\label{OptPer}
-\ln(\gamma(x_{t-1})) &= \underset{c(u_{t-1}|x_{t-1})}{\text{min}}\sum_{\tau=t}^\mathcal{H} \int f(\mathcal{Z}_t, \dots, \mathcal{Z}_\mathcal{H}|x_{t-1}) \times\ln \left( \dfrac{s(o_\tau|x_\tau)c(u_{\tau-1}|x_{\tau-1})}{^Is(o_\tau|x_\tau)^Ic(u_{\tau-1}|x_{\tau-1})} \right) d(\mathcal{Z}_t, \dots, \mathcal{Z}_\mathcal{H}),
\end{align}
for any arbitrary $\tau \in {1, . . . , \mathcal{H}}$. In this equation, $\mathcal{Z}_t$ denotes the set ${x_t,o_t,u_{t-1}}$. Importantly, observe that the argument of the $\ln$ function does not include the conditional density of the quantum system state, $s(x_t|x_{t-1},u_{t-1})$, nor its corresponding ideal distribution, $^Is(x_t|x_{t-1},u_{t-1})$. This omission is not an oversight, but instead reflects a fundamental quantum property: the state of the dynamic system, being unobservable, evolves naturally according to the laws of quantum evolution. This quantum characteristic is manifested by equating the actual distribution of the quantum system state to its corresponding ideal distribution, i.e., $^Is(x_t|x_{t-1},u_{t-1}) = s(x_t|x_{t-1},u_{t-1})$. This equality underpins the unobserved natural evolution of the quantum state, a principle fundamental to quantum dynamics and a cornerstone of the proposed control methodology.

 Minimization of the cost to go function (\ref{OptPer}) can then be performed recursively to give the following recurrence equation analogical to the dynamic programming solution:
\begin{align}
	\label{cost_to_go_2}
-\ln(\gamma(x_{t-1})) &=  \underset{c(u_{t-1}|x_{t-1})}{\text{min}} \int\bigg[ s(x_t|x_{t-1}, u_{t-1})s(o_t|x_t)c(u_{t-1}|x_{t-1})\bigg(\ln \left( \dfrac{s(o_t|x_t)c(u_{t-1}|x_{t-1})}{^Is(o_t|x_t)^Ic(u_{t-1}|x_{t-1})} \right) \nonumber \\ &-\ln(\gamma(x_{t}))\bigg)  \bigg]d(x_t,o_t,u_t).
\end{align}
This innovative approach for devising a controller pdf provides an elegant solution for addressing the inherent uncertainties in quantum systems. By minimizing the KLD between the actual and ideal joint pdfs, our method facilitates more accurate control laws under various conditions of uncertainty, thereby enhancing the overall performance and reliability of quantum systems. The fully probabilistic control framework enables us to account for various sources of uncertainty and stochasticity affecting the quantum systems, and to construct efficient control strategies tailored specifically for quantum systems. Moreover, the recursive nature of the controller pdf design allows for efficient online implementation, making it adaptable to real-time applications in quantum information processing, quantum communication, and quantum sensing. The proposed approach addresses the challenges associated with controlling ensembles of quantum systems and offers a novel, comprehensive framework for the robust management of their dynamics.

\subsection{General Solution to the Fully Probabilistic Quantum Control Problem}\label{SFPDCQS}
In this section, we derive the general solution of the optimal randomized controller that minimizes the KLD given in equation (\ref{KLD}) for systems defined by arbitrary pdfs. This solution is not restricted to specific forms of the pdfs and can be applied to any quantum system described by arbitrary conditional pdfs.
\begin{theorem}\label{Prop1}
The pdf of the optimal control law $c(u_{t-1}|x_{t-1})$ that minimizes the cost-to-go function (\ref{cost_to_go_2}) between the joint pdf of the closed loop description of a quantum system (\ref{JointDist}) and a pre-specified ideal one (\ref{IdealJointDist}) is given by:
\begin{equation}
	\label{eq:Eqn4}
	c(u_{t-1}|x_{t-1})=\frac{^Ic(u_{t-1}|x_{t-1}) \exp [-\beta(u_{t-1},x_{t-1})]}{\gamma(x_{t-1})},
\end{equation}
where,
\begin{equation}\label{gamma2}
\gamma(x_{t-1}) = \int {^Ic(u_{t-1}|x_{t-1}) \exp[{-\beta(u_{t-1},x_{t-1})}] \mathrm{d} u_{t-1}},
\end{equation}
and,
\begin{equation}\label{beta}
\beta(u_{t-1},x_{t-1}) = \int s(x_{t}|u_{t-1}, x_{t-1}) s(o_{t}|x_{t})\times \ln \bigg ( \frac{s(o_{t}|x_{t}) }{ ^I s(o_{t}|x_{t}) } \frac{1}{{\gamma}(x_{t})}\bigg) \mathrm{d} x_{t}\mathrm{d} o_{t}.
\end{equation}
\end{theorem}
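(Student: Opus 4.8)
The plan is to solve the single-stage minimisation appearing in \eqref{cost_to_go_2} explicitly and then close the argument by backward induction over $t=\mathcal{H},\mathcal{H}-1,\dots,1$, treating $-\ln(\gamma(x_t))$ as a function already determined at the subsequent stage (with the terminal convention $\gamma(x_{\mathcal{H}})\equiv 1$, so that no cost-to-go term follows the final horizon step). Fix $x_{t-1}$ and regard the right-hand side of \eqref{cost_to_go_2} as a functional of the unknown controller pdf $c(u_{t-1}|x_{t-1})$. Splitting the logarithm as $\ln\frac{s(o_t|x_t)}{{}^I s(o_t|x_t)}+\ln\frac{c(u_{t-1}|x_{t-1})}{{}^I c(u_{t-1}|x_{t-1})}-\ln\gamma(x_t)$ and using that $c(u_{t-1}|x_{t-1})$ and $\ln\frac{c}{{}^I c}$ do not depend on $x_t$ or $o_t$, I would carry out the inner integration over $(x_t,o_t)$ first. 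Because $\int s(o_t|x_t)\,\mathrm{d}o_t=1$ and $\int s(x_t|x_{t-1},u_{t-1})\,\mathrm{d}x_t=1$, the $\ln\frac{c}{{}^I c}$ contribution collapses to $\int c(u_{t-1}|x_{t-1})\ln\frac{c(u_{t-1}|x_{t-1})}{{}^I c(u_{t-1}|x_{t-1})}\,\mathrm{d}u_{t-1}$, while the remaining two terms integrate, against $s(x_t|x_{t-1},u_{t-1})s(o_t|x_t)$, to precisely $\beta(u_{t-1},x_{t-1})$ as defined in \eqref{beta}. Note that the state-transition factor $s(x_t|x_{t-1},u_{t-1})$ never enters any logarithm here: this is exactly where the standing assumption $^Is(x_t|x_{t-1},u_{t-1})=s(x_t|x_{t-1},u_{t-1})$ is used, since it forces the corresponding log-ratio to vanish identically.

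After this reduction the problem becomes
\begin{equation*}
-\ln(\gamma(x_{t-1}))=\min_{c(u_{t-1}|x_{t-1})}\int c(u_{t-1}|x_{t-1})\Big[\ln\frac{c(u_{t-1}|x_{t-1})}{{}^I c(u_{t-1}|x_{t-1})}+\beta(u_{t-1},x_{t-1})\Big]\,\mathrm{d}u_{t-1}.
\end{equation*}
The key manipulation is to absorb the linear term by writing $\beta=-\ln\exp(-\beta)$ and introducing the normaliser $\gamma(x_{t-1})=\int {}^I c(u_{t-1}|x_{t-1})\exp[-\beta(u_{t-1},x_{t-1})]\,\mathrm{d}u_{t-1}$ from \eqref{gamma2}. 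The bracket then equals $\ln\big(c(u_{t-1}|x_{t-1})/[{}^I c(u_{t-1}|x_{t-1})\exp(-\beta)/\gamma(x_{t-1})]\big)-\ln\gamma(x_{t-1})$, so the whole functional is $\mathcal{D}\big(c(\cdot|x_{t-1})\,\|\,{}^I c(\cdot|x_{t-1})\exp(-\beta)/\gamma(x_{t-1})\big)-\ln\gamma(x_{t-1})$, where the last term came out of the integral because $c$ integrates to one. By Gibbs' inequality the Kullback-Leibler term is nonnegative and, by strict convexity, is zero for exactly one admissible $c$, namely $c(u_{t-1}|x_{t-1})={}^I c(u_{t-1}|x_{t-1})\exp[-\beta(u_{t-1},x_{t-1})]/\gamma(x_{t-1})$, which is \eqref{eq:Eqn4}. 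Substituting this optimiser reproduces the minimum value $-\ln\gamma(x_{t-1})$ on the left-hand side of \eqref{cost_to_go_2}, which both confirms \eqref{gamma2} and completes the induction step.

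I expect the main subtlety, rather than a genuine obstacle, to lie in the bookkeeping forced by the complex-valued densities highlighted in the introduction: one must verify that the objects entering \eqref{KLD}--\eqref{beta} are genuinely normalised, that $\gamma(x_{t-1})$ in \eqref{gamma2} is finite and nonzero so that the tilted distribution in \eqref{eq:Eqn4} is well defined, and that the nonnegativity and strict convexity of the KLD still pin down a unique minimiser over the admissible class of controller pdfs. A secondary technical point is justifying the interchange of the $u_{t-1}$ and $(x_t,o_t)$ integrations used to extract $\beta$; under the regularity already implicitly assumed for the integrals in \eqref{KLD} to exist, this is a routine application of Fubini's theorem. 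Since the passage from \eqref{OptPer} to the recursion \eqref{cost_to_go_2} has already been carried out in the excerpt, once the single-stage result above is in hand the theorem follows.
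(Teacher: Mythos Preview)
Your argument is correct and is precisely the standard derivation of the fully probabilistic design optimal controller: reduce the one-step problem to $\int c\,[\ln(c/{}^Ic)+\beta]\,\mathrm{d}u_{t-1}$, tilt ${}^Ic$ by $\exp(-\beta)$, and invoke Gibbs' inequality to identify the unique minimiser and the attained value $-\ln\gamma(x_{t-1})$. The paper itself does not supply a proof at all---it simply points to Proposition~2 of K\'arn\'y and Guy \cite{Karny_2}---so your write-up is in fact more complete than what the paper offers, while following the same line of reasoning as the cited reference.
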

\begin{proof}
This theorem can be easily proven by adapting the proof of Proposition 2 in Ref \cite{Karny_2}.
\end{proof}
The above theorem presents a general solution for the fully probabilistic control problem of quantum systems, without being restricted to a specific form of the generative probabilistic model describing the system dynamics. This solution is applicable to any quantum system described by an arbitrary pdf.

\section{Solution to Quantum Control Problems with Gaussian pdfs}\label{QuantumCA}
Building upon the general solution of the fully probabilistic control approach presented in Section (\ref{SFPDCQS}), we now focus on quantum systems characterized by Gaussian probability density functions (pdfs). The linear Gaussian setting simplifies the problem and provides an opportunity to demonstrate the efficacy of the method.

\subsection{System Description and Assumptions}
This section considers quantum systems described by a bilinear state space model of the form given in (\ref{Inpu2NoiseDelay}) and (\ref{outp}) when the noises $\eta_t$ in (\ref{eta}) and $\sigma_t$ are generated using a Gaussian process. Under these conditions, the pdf of the quantum system state (\ref{s_x_t}) can be described by linear Gaussian pdfs as follows:
\begin{align}
\label{eq:eq4}
s(x_{t}\left| {x_{t - 1}},u_{t-1} \right.) &\sim \mathcal{N}_{\mathcal{C}}(\mu_{t},{\Gamma_t}),
\end{align}
where $\mu_t$ and $\Gamma_t$ are the mean and covariance matrices, respectively:
\begin{align}
\label{eq:relation1}
\mu_t& =E(x_t)=Ax_{t-1}+Bu_{t-1},\\
\Gamma_t&= E((x_t-\mu_t)(x_t-\mu_t)^\dagger), \nonumber \\
& = E(\zeta_t A x_{t-1} \zeta_t x_{t-1}^\dagger A^\dagger) ,\nonumber \\
\label{eq:InDepCov}
& = A x_{t-1} \Sigma x_{t-1}^\dagger A^\dagger,
\end{align}
where we used equations (\ref{Inpu2NoiseDelay}) and (\ref{eta}) in order to evaluate the covariance matrix $\Gamma_t$. In this case, the matrices $A$ and $B$ represent the state and control matrices, respectively, appearing in (\ref{Inpu2NoiseDelay}). The functional $E(.)$ denotes the expectation value, $x_t^\dagger$ is the conjugate transpose of $x_t$, $\Sigma=E(\zeta_t^2)$ and $\mathcal{N}_{\mathcal{C}}$ denotes a complex Gaussian distribution. The forms of the complex Gaussian distribution and the normal Gaussian distribution are recalled in Appendix (\ref{app_C}).

Furthermore, with the noise term $\sigma_t$ in (\ref{outp}) generated by a Gaussian distribution, the pdf of the measurement $o_t$ in (\ref{s_o_t}) can be characterized by Gaussian pdf as follows:
\begin{align}
\label{eq:eq5}
s({o_t}\left| {{x_{t}}} \right.) &\sim \mathcal{N}(o_d,G),
\end{align}
where $o_d=Dx_t$ and $G=E(\sigma_t \sigma_t^T)=E((o_t-Dx_t)(o_t-Dx_t)^T)$ are the mean and covariance matrices, respectively of this pdf. 

Under the aforementioned conditions, the quantum system state $x_t$, measurement $o_t$, and the electric field $u_t$ at each time $t$ can be fully described by the following joint pdf:
\begin{equation}\label{gpdf2}
f(x_t,o_t,u_{t-1}|x_{t-1})=c(u_{t-1}|x_{t-1})\mathcal{N}(Dx_{t},G)\mathcal{N}_{\mathcal{C}}(\mu{t},{\Gamma_t}).
\end{equation}
Taking this into account, we postulate that the ideal joint pdf governing the entire system-- including the system state, measurement, and controller-- is given by:
\begin{equation}\label{igpdf2}
^If(x_t,o_t,u_{t-1}|x_{t-1})= {}^Is(x_{t} | x_{t-1},{u_{t-1}}){}^Is({o_t}\left| {{x_{t}}} \right.) {}^Ic({u_{t-1}}\left| {x_{t - 1}} \right),
\end{equation}
where the ideal distributions are given by:
\begin{align}
\label{eq:eq1st}
		{}^Is(x_{t} | x_{t-1},{u_{t-1}})& =s(x_{t} | x_{t-1},{u_{t-1}})\sim  \mathcal{N}_{\mathcal{C}}(\mu_{t},{\Gamma_t}),\\
	\label{eq:eq2}
	{}^Is({o_t}\left| {{x_{t}}} \right.) &\sim \mathcal{N}({o}_{d},G_r),\\
	\label{eq:eq3}
	{}^Ic({u_{t-1}}\left| {x_{t - 1}} \right.) &\sim \mathcal{N}(u_r,\Omega),
\end{align}
and where $o_d$ and $G_r$ are the mean and covariance matrices of the ideal pdf of the measurement respectively, and $u_r$ and $\Omega$ are the mean and the covariance matrices respectively of the ideal pdf of the controller. Furthermore, as illustrated in Equation (\ref{eq:eq1st}), the ideal distribution of the quantum state, ${}^Is(x_{t} | x_{t-1},{u_{t-1}})$, is set to be identical to the actual distribution, $s(x_{t} | x_{t-1},{u_{t-1}})$, which is characterized by the master equation (\ref{LVN_eq}). This equivalence signifies the undetected natural evolution of the quantum state, an intrinsic characteristic of quantum dynamics, which forms a pivotal component of the proposed control methodology.

\subsection{Solution to the linear Gaussian Quantum Control Problem}
Given the defined joint pdf of the quantum system (\ref{gpdf2}) and the ideal joint pdf (\ref{igpdf2}), we can now apply the general solution of the fully probabilistic control approach presented in Section (\ref{SFPDCQS}) to solve the quantum control problem for linear Gaussian systems. Specifically, by incorporating the components of the joint pdf of the closed-loop description of the quantum system (\ref{gpdf2}) and the ideal joint pdf (\ref{igpdf2}) into equations (\ref{eq:Eqn4})-(\ref{beta}), we can derive analytic solutions for the optimal cost-to-go function and the randomized controller. We will soon present the form of the pdf for the optimal control, but first, let us outline the structure of the optimal cost-to-go function.

\begin{theorem}\label{Theo1}
By substituting the ideal distribution of the measurements (\ref{eq:eq2}), the ideal distribution of the controller (\ref{eq:eq3}), the real distribution of the measurement (\ref{eq:eq5}), and the real distribution of system dynamics (\ref{eq:eq4}) into (\ref{gamma2}), the optimal cost to go function can be shown to be given by,
	\begin{align}
		- \ln \left( {\gamma \left( {{x_{t-1}}} \right)} \right) &= 0.5{x}_{t-1}^{\dagger}{M_{t-1}}{x_{t-1}}+0.5P_{t-1}{{x}_{t-1}} + {0.5\omega_{t-1}}, \label{49}
	\end{align}
	where, 
	\begin{align}
		\label{Mt}
		{M_{t - 1}} &=A^{\dagger} (D^{\dagger} G_r^{-1}D+M_{t})A-A^{\dagger} (D^{\dagger} G_r^{-1}D+M_{t})^{\dagger} B
		\big(\Omega^{-1}+B^{\dagger} ({D^{\dagger} }G_r^{ - 1}D+M_{t})B\big)^{-1}
		\nonumber\\
		&	B^{\dagger} (D^{\dagger} G_r^{-1}D+M_{t})A+A^{\dagger} (D^{\dagger} G_r^{-1}D+M_{t}){\Sigma} A,\end{align}

	\begin{align}
		\label{Pt}
		P_{t-1}&=(P_{t}-2{o}_{d}^{\dagger} G_r^{-1}D)A+2(\Omega^{-1}u_r-0.5B^{\dagger} (P_{t}^{\dagger} -2D^{\dagger} G_r^{-1}{o}_{d}))^{\dagger} 
		\big(\Omega^{-1}+B^{\dagger} ({D^{\dagger} }G_r^{ - 1}D+M_{t})B\big)^{-1}\nonumber\\
		&B^{\dagger} (D^{\dagger} G_r^{-1}D+M_{t})A,
	\end{align}
	and,
	\begin{align}
		\label{Const}
		\omega_{t-1}&={\omega_{t}}+ {o}_{d}^{\dagger} G_r^{ - 1}{{o}_{d}}   +\ln\bigg(\dfrac{|G_r|}{|G|}\bigg)- \Tr\big(G(G^{-1}-G_r^{ - 1})\big)
		+u_r^{\dagger} \Omega^{-1}u_r	\nonumber\\
		&-\big( \Omega^{-1}u_r-0.5B^{\dagger} (P_{t}^{\dagger} -2{D}^{\dagger}G_r^{ - 1}{{o}_{d}})\big)^{\dagger} 
		\big(\Omega^{-1}+B^{\dagger} ({D^{\dagger} }G_r^{ - 1}D+M_{t})B\big)^{-1}	
		\big( \Omega^{-1}u_r-0.5B^{\dagger} (P_{t}^{\dagger} -2{D}^{\dagger}G_r^{ - 1}{{o}_{d}})\big)\nonumber\\
		&-2\ln({|\Omega|^{-1/2}}|\Omega^{-1}+B^{\dagger} ({D^{\dagger} }G_r^{ - 1}D+M_{t})B|^{-1/2}),
	\end{align}	
where $|G|$ stands for the determinant of the matrix $G$. 
\end{theorem}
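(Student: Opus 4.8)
The plan is to prove the claim by backward induction on the time index, using the recursive reduction (\ref{cost_to_go_2}) of the cost-to-go function, which — after the minimization already carried out in Theorem \ref{Prop1} — collapses into the pair of formulas (\ref{gamma2})--(\ref{beta}). I would take as the induction hypothesis that at time $t$ the quantity $-\ln(\gamma(x_t))$ has the quadratic form (\ref{49}) with coefficients $M_t$, $P_t$, $\omega_t$, the base case being the natural terminal condition $M_\mathcal{H}=0$, $P_\mathcal{H}=0$, $\omega_\mathcal{H}=0$. It then remains to evaluate $\beta(u_{t-1},x_{t-1})$ in (\ref{beta}) and the normalizing integral $\gamma(x_{t-1})$ in (\ref{gamma2}) in closed form under the Gaussian assumptions (\ref{eq:eq4}), (\ref{eq:eq5}), (\ref{eq:eq2}), (\ref{eq:eq3}), and to read off the induced coefficients $M_{t-1}$, $P_{t-1}$, $\omega_{t-1}$ by matching powers of $x_{t-1}$.

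First I would carry out the inner integral over $o_t$ in (\ref{beta}). Since $s(o_t|x_t)\sim\mathcal{N}(Dx_t,G)$ and ${}^Is(o_t|x_t)\sim\mathcal{N}(o_d,G_r)$, the log-ratio $\ln\!\big(s(o_t|x_t)/{}^Is(o_t|x_t)\big)$ is a quadratic in $o_t$ whose expectation under $\mathcal{N}(Dx_t,G)$ is the cross-entropy expression $\tfrac12\big[\ln(|G_r|/|G|)-\Tr\!\big(G(G^{-1}-G_r^{-1})\big)+(Dx_t-o_d)^\dagger G_r^{-1}(Dx_t-o_d)\big]$, while the factor $-\ln(\gamma(x_t))=0.5x_t^\dagger M_t x_t+0.5P_t x_t+0.5\omega_t$ is independent of $o_t$ and simply survives. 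This reduces $\beta$ to an $x_t$-integral of a quadratic-plus-constant in $x_t$, with $x_t$-quadratic coefficient $\tfrac12(D^\dagger G_r^{-1}D+M_t)$ and $x_t$-linear coefficient built from $P_t$ and $-2o_d^\dagger G_r^{-1}D$. Next I would perform the $x_t$-integral using $x_t\sim\mathcal{N}_\mathcal{C}(\mu_t,\Gamma_t)$ with $\mu_t=Ax_{t-1}+Bu_{t-1}$ and $\Gamma_t=\Sigma\,Ax_{t-1}x_{t-1}^\dagger A^\dagger$ from (\ref{eq:relation1})--(\ref{eq:InDepCov}), via $E(x_t^\dagger S x_t)=\mu_t^\dagger S\mu_t+\Tr(S\Gamma_t)$ and $E(Px_t)=P\mu_t$. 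Since $\Sigma$ is a scalar, $\Tr(S\Gamma_t)=\Sigma\,x_{t-1}^\dagger A^\dagger S A x_{t-1}$ with $S=D^\dagger G_r^{-1}D+M_t$ — this is precisely where the extra term $A^\dagger(D^\dagger G_r^{-1}D+M_t)\Sigma A$ of (\ref{Mt}) comes from, i.e. the multiplicative-noise contribution. The outcome is an explicit quadratic form for $\beta(u_{t-1},x_{t-1})$ in the pair $(x_{t-1},u_{t-1})$.

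Finally I would evaluate $\gamma(x_{t-1})=\int {}^Ic(u_{t-1}|x_{t-1})\exp[-\beta]\,du_{t-1}$ with ${}^Ic(u_{t-1}|x_{t-1})\sim\mathcal{N}(u_r,\Omega)$. Combining the Gaussian exponent of ${}^Ic$ with $-\beta$ gives a quadratic in $u_{t-1}$ whose quadratic coefficient is $\tfrac12\big(\Omega^{-1}+B^\dagger(D^\dagger G_r^{-1}D+M_t)B\big)$; completing the square in $u_{t-1}$ and performing the Gaussian integral then produces (i) the determinant factor $-2\ln\!\big(|\Omega|^{-1/2}|\Omega^{-1}+B^\dagger(D^\dagger G_r^{-1}D+M_t)B|^{-1/2}\big)$ appearing in (\ref{Const}), and (ii) a residual quadratic form in $x_{t-1}$ whose coefficients are exactly (\ref{Mt}), (\ref{Pt}), (\ref{Const}), once one invokes the Schur-complement manipulation $S-S^\dagger B(\Omega^{-1}+B^\dagger S B)^{-1}B^\dagger S$ for the quadratic part and the corresponding manipulation for the linear terms carrying $P_t$ and $o_d^\dagger G_r^{-1}D$, and collects the $x_{t-1}$-free remainder into $\omega_{t-1}$. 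Matching the $x_{t-1}$-quadratic, $x_{t-1}$-linear, and constant parts with (\ref{49}) then closes the induction.

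I expect the main obstacle to be the bookkeeping of this last step: completing the matrix square in $u_{t-1}$ while simultaneously tracking the affine term $P_t$ and the $x_{t-1}$--$u_{t-1}$ cross terms, and doing so consistently with the conjugate-transpose convention (the state being complex Gaussian while the measurement is real Gaussian) and the pervasive factor-of-$\tfrac12$ normalization, so that the collected coefficients reproduce (\ref{Mt})--(\ref{Const}) exactly rather than merely up to symmetrization. A secondary point to verify carefully is that the state-dependent covariance $\Gamma_t$ feeds only into $M_{t-1}$ and not into $P_{t-1}$ or $\omega_{t-1}$, which follows from $\Gamma_t$ being purely quadratic in $x_{t-1}$ with no lower-order part.
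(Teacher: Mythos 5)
Your proposal is correct and follows essentially the same route as the paper's proof in Appendix~\ref{AppA}: evaluate the $o_t$-integral of the log-ratio (yielding the $\ln(|G_r|/|G|)$ and trace terms), then the $x_t$-integral using $E(x_t^\dagger S x_t)=\mu_t^\dagger S\mu_t+\Tr(S\Gamma_t)$ with the cyclic-trace manipulation producing the $A^\dagger(D^\dagger G_r^{-1}D+M_t)\Sigma A$ contribution, and finally complete the square in $u_{t-1}$ under $^Ic(u_{t-1}|x_{t-1})$ to obtain the Schur-complement forms of $M_{t-1}$, $P_{t-1}$, $\omega_{t-1}$. The only cosmetic difference is that you make the backward-induction framing and terminal conditions explicit, which the paper leaves implicit.
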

\begin{proof}
	The proof of this theorem is given in Appendix (\ref{AppA}).
\end{proof}
Having established the form of the optimal cost-to-go function (\ref{49}), we can utilize it in equation (\ref{eq:Eqn4}) to determine the distribution of the optimal controller $c(u_{t-1}|x_{t-1})$. This optimizes the pdf of the entire system (\ref{gpdf2}), essentially bringing the pdf of the system as close as possible to the ideal one. The form of this optimal controller is presented in the subsequent theorem.

\begin{theorem}\label{Theo2}
The controller's distribution, which minimizes the Kullback-Leibler Divergence (KLD) between the joint pdf of the closed-loop system dynamics described in (\ref{gpdf2}) and the ideal joint pdf presented in (\ref{igpdf2}), is given by:
	\begin{align}\label{Probcapp}
c({u_{t-1}}\left| {x_{t - 1}} \right.) &\sim \mathcal{N}(v_{t-1},R_t),
	\end{align}
where, 
\begin{align}\label{optimalcapp}
	v_{t-1}&=\bigg({\Omega ^{ - 1}} + {B^{\dagger}}( {D^{\dagger}}G_r^{ - 1}D + {M_t})B\bigg)^{-1} \bigg({\Omega ^{ - 1}}{u_r} - {B^{\dagger}}({D^{\dagger}}G_r^{ - 1}D + {M_t})A{x_{t - 1}} -0.5 {B^{\dagger}}( P_{t }^{\dagger} - 2{D^{\dagger}}G_r^{ - 1}{{o}_{d}} )\bigg),
\end{align}
and,
\begin{align}\label{optimal_variance}
	R_t&=\bigg({\Omega ^{ - 1}} + {B^{\dagger} }( {D^{\dagger} }G_r^{ - 1}D + {M_t})B\bigg)^{-1}.
\end{align}
\end{theorem}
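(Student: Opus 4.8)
The plan is to derive the controller's distribution \eqref{Probcapp} by substituting the explicit Gaussian forms into the general solution \eqref{eq:Eqn4}, using the already-established form of the cost-to-go function from Theorem \ref{Theo1}. First I would recall that by \eqref{eq:Eqn4}, $c(u_{t-1}|x_{t-1}) \propto {}^Ic(u_{t-1}|x_{t-1}) \exp[-\beta(u_{t-1},x_{t-1})]$, so the whole task reduces to identifying the $u_{t-1}$-dependence of the exponent. The ideal controller ${}^Ic(u_{t-1}|x_{t-1}) \sim \mathcal{N}(u_r,\Omega)$ contributes a term $-\tfrac{1}{2}(u_{t-1}-u_r)^{\dagger}\Omega^{-1}(u_{t-1}-u_r)$ to the log. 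For $-\beta(u_{t-1},x_{t-1})$, I would substitute \eqref{eq:eq4}, \eqref{eq:eq5}, \eqref{eq:eq2} and the quadratic form \eqref{49} into \eqref{beta}; the key observation is that the only $u_{t-1}$-dependence enters through $\mu_t = Ax_{t-1} + Bu_{t-1}$ in the complex Gaussian $s(x_t|x_{t-1},u_{t-1})$, so carrying out the Gaussian integral over $x_t$ (and the trivial one over $o_t$) leaves a quadratic-in-$u_{t-1}$ expression coming from the $M_t$ and $P_t$ pieces of $-\ln\gamma(x_t)$ evaluated at mean $\mu_t$, plus the measurement-mismatch cross term involving $D^{\dagger}G_r^{-1}o_d$.

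The central computational step is therefore: collect all terms in $\ln\!\big({}^Ic\big) - \beta$ that are quadratic or linear in $u_{t-1}$. The quadratic part will be $-\tfrac{1}{2}u_{t-1}^{\dagger}\big(\Omega^{-1} + B^{\dagger}(D^{\dagger}G_r^{-1}D + M_t)B\big)u_{t-1}$, which immediately identifies the covariance $R_t$ in \eqref{optimal_variance} as the inverse of that matrix. The linear part will be $u_{t-1}^{\dagger}\big(\Omega^{-1}u_r - B^{\dagger}(D^{\dagger}G_r^{-1}D + M_t)Ax_{t-1} - \tfrac{1}{2}B^{\dagger}(P_t^{\dagger} - 2D^{\dagger}G_r^{-1}o_d)\big)$; completing the square then yields the mean $v_{t-1}$ in \eqref{optimalcapp} as $R_t$ times that linear coefficient. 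All remaining terms are independent of $u_{t-1}$ and get absorbed into the normalisation $\gamma(x_{t-1})$ (consistent with \eqref{gamma2} and \eqref{49}), so they need not be tracked here.

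Concretely, the steps in order are: (i) write $-\beta$ using \eqref{beta} with the Gaussian substitutions; (ii) perform the integral over $o_t$, which only shifts constants since ${}^Is(o_t|x_t)$ and $s(o_t|x_t)$ share the same mean $Dx_t$; (iii) perform the integral over $x_t$ against $\mathcal{N}_{\mathcal{C}}(\mu_t,\Gamma_t)$, using that $\int \mathcal{N}_{\mathcal{C}}(\mu_t,\Gamma_t)(x_t^{\dagger}M_t x_t)\,dx_t = \mu_t^{\dagger}M_t\mu_t + \Tr(M_t\Gamma_t)$ and similarly for the linear term, with $\mu_t = Ax_{t-1}+Bu_{t-1}$; (iv) add the $\ln({}^Ic)$ contribution and expand $\mu_t$ to expose the $u_{t-1}$-dependence; (v) complete the square in $u_{t-1}$ to read off $v_{t-1}$ and $R_t$ and confirm the result is the Gaussian \eqref{Probcapp}. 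I expect the main obstacle to be bookkeeping rather than conceptual: correctly propagating the conjugate-transpose structure of the complex Gaussian (so that Hermitian symmetry of the quadratic coefficient is preserved and $R_t$ comes out Hermitian), and making sure the $\Tr(M_t\Gamma_t)$-type terms, which depend on $x_{t-1}$ but not on $u_{t-1}$, are consistently routed into $M_{t-1}$ and $\omega_{t-1}$ rather than contaminating $v_{t-1}$. Since Theorem \ref{Theo1} has already fixed the form of $-\ln\gamma(x_{t-1})$, the consistency of the $u_{t-1}$-independent remainder with \eqref{Mt}--\eqref{Const} is automatic, so I would simply note that and focus the write-up on the completion of the square.
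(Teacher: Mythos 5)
Your overall route is the same as the paper's (Appendix (\ref{appC})): substitute the Gaussian ideal controller \eqref{eq:eq3} together with the computed $\beta(u_{t-1},x_{t-1})$ and $\gamma(x_{t-1})$ into the general solution \eqref{eq:Eqn4}, isolate the terms quadratic and linear in $u_{t-1}$, and complete the square; the quadratic and linear coefficients you collect are exactly those that yield \eqref{optimalcapp} and \eqref{optimal_variance}. One step is misstated, however. In (ii) you claim the integral over $o_t$ ``only shifts constants'' because $s(o_t|x_t)$ and ${}^Is(o_t|x_t)$ share the same mean $Dx_t$. They do not: the ideal measurement pdf \eqref{eq:eq2} is centred at the fixed target value $o_d$, while the actual pdf \eqref{eq:eq5} is centred at $Dx_t$, and the $o_t$-integral of $\ln\big(s(o_t|x_t)/{}^Is(o_t|x_t)\big)$ is precisely where the contributions $0.5\,x_t^{\dagger}D^{\dagger}G_r^{-1}Dx_t - x_t^{\dagger}D^{\dagger}G_r^{-1}o_d$ originate; these propagate through $\mu_t = Ax_{t-1}+Bu_{t-1}$ into the $B^{\dagger}D^{\dagger}G_r^{-1}DB$ part of $R_t^{-1}$ and the $B^{\dagger}D^{\dagger}G_r^{-1}o_d$ part of $v_{t-1}$. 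Taken literally, your step (ii) would eliminate the tracking term and produce a controller independent of the target $o_d$, contradicting \eqref{optimalcapp}. Since your opening paragraph and your final collected linear coefficient do include these terms, this is an internal inconsistency to repair rather than a missing idea, but the justification of the $o_t$-integration must be corrected before the write-up is sound.
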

\begin{proof}
The proof is given in Appendix (\ref{appC}).
\end{proof}
The randomised optimal controller derived in Theorem (\ref{Theo2}) represents a novel approach to controlling an ensemble of quantum systems. By introducing a multiplicative noise, $\eta(x_{t-1})$ and an additive noise, $\sigma_{t}$, to the state vector $x_t$ and the observations $o_t$ respectively, this innovative linear form of the controller effectively addresses the challenges of controlling randomly selected members of quantum ensembles. This breakthrough in quantum control theory enhances the efficacy and adaptability of control strategies for  quantum systems.

\subsection{Implementation of the Gaussian controller}
The implementation of the Gaussian controller, as described in Theorem (\ref{Theo2}), can be broken down into a step-by-step process, which is summarized in the following algorithm. The implementation consists of two primary phases: Optimization and Testing.

During the Optimization phase, the objective is to characterize the probability density function (pdf) of the randomized controller by optimizing its parameters. The optimization process utilizes historical data, simulated data, or real-time data obtained online to achieve the desired performance of the controlled quantum system. By optimizing these parameters, the randomized controller will be able to drive the pdf of the joint distribution of the closed-loop description of the quantum system to a pre-specified ideal joint pdf.

The Testing phase involves using the Gaussian controller, constructed during the Optimization phase, to drive members of an ensemble to the same target state. The ensemble members are randomly generated, allowing for an assessment of the controller's performance under a range of initial conditions. This benchmarking process demonstrates the robustness and efficiency of the Gaussian controller when applied to real-world quantum systems.

\begin{algorithm}[H]
\caption{Fully probabilistic control of quantum systems}\label{alg:cap}
\begin{algorithmic}[1]
\State $\textbf{Optimization:}$
\State Evaluate the operator $D$ associated with the target operator $\hat{o}$;
\State Compute the matrices $\tilde{A}$ and $\tilde{N}$ from equation (\ref{rho_elements});
\State Determine the predefined desired value ${o}_{d}$;
\State Specify the initial state $x_0$, the shifting state $x_e$, and then calculate the initial value of the measurement state $o_0 \gets Dx_0$;
\State Provide the covariances $\Sigma, G, G_r$, and $\Omega$ of the multiplicative noise, $\eta(x_{t-1})$, the measurement vector, $o_t$, the ideal distribution of $o_t$, and the controller, $u_t$ respectively;
\State Initialize: $t \gets 0$, $M_0 \gets \text{rand}$, $P_0 \gets \text{rand}$;
\While{$t \ne \mathcal{H}$}
\State Set: $\zeta_t = $ sample\_from\_Gaussian($0$, $\Sigma$);
\State Evaluate $A$ and $B$ using equation (\ref{A_B_t});
\State Calculate the steady state solutions of $M_t$ and $P_t$ following the formulas provided in equations (\ref{Mt}) and (\ref{Pt}), respectively;
\State Use $M_t$ and $P_t$ to compute the optimal control input, $v_{t-1}$ following equation (\ref{optimalcapp}) given in Theorem (\ref{Theo2});
\State Set: $u_{t-1} \gets v_{t-1}$;
\State Use the obtained control input from the previous step to evaluate $x_t$ according to equation (\ref{Inpu2NoiseDelay}), $x_{t} \gets Ax_{t-1} + B(x_{t-1}) u_{t-1} + \zeta_{t}Ax_{t-1}$;
\State Set: $\sigma_t = $ sample\_from\_Gaussian($0$, $G$);
\State Following equation (\ref{outp}), evaluate $o_t$ to find the measurement state at time instant $t$, $o_t \gets Dx_t + \sigma_t$;
\State Update time: $t \gets t + 1$;
\EndWhile
\State $\textbf{Testing:}$
\State Set $N$: the number of members of the ensemble;
\State Initialize: $k \gets 1$;
\While{$k \le N$}
\While{$t \ne \mathcal{H}$}
\State Set: $\zeta^k_t = $ sample\_from\_Gaussian($0$, $\Sigma$);
\State Evaluate $A^k$ and $B^k$ using equation (\ref{A_B_t});
\State Using the obtained control input from the optimization phase, evaluate $x_t^k$ according to equation (\ref{Inpu2NoiseDelay}), $x_{t}^k \gets A^kx^k_{t-1} + B^k(x^k_{t-1}) u_{t-1} + \zeta_{t}^k A^k{x}_{t-1}$;
				\State  Set: $\sigma^k_t = $ sample\_from\_Gaussian($0$, $G$);
				\State  Following equation (\ref{outp}), evaluate $o^k_t$ to find the measurement state at time instant $t$, $o^k_t \gets Dx_t^k + \sigma_t^k$;
			\EndWhile
			\State  Increment ensemble member: $k \gets k + 1$;
		\EndWhile
	\end{algorithmic}
\end{algorithm}
Algorithm (\ref{alg:cap}) will be applied in the next section to control an ensemble of randomly generated systems to the same desired state.

\section{Results and discussions}\label{Results and discussions}
\subsection{Control of quantum ensemble}
In this section, we demonstrate the effectiveness of our algorithm when applied to an ensemble of quantum systems. We consider a simplified Hamiltonian $H$ as described in (\ref{Hamil}), given by:
\begin{equation}
H = H_0 + {u}_{t}H_1,
\end{equation}
where we assume that the system interacts with only one electric field, represented by ${u}_{t}$. Our control objective is to transfer an ensemble of quantum systems, initially prepared in a state $\rho_i$, to a predefined target state $\rho_d$. This transfer is achieved through the interaction between the ensemble members and the optimal randomized electric field. The parameters of this electric field are optimized using Theorem (\ref{Theo2}) to achieve a pre-specified desired joint density function for the joint distribution of the ensemble dynamics. We will demonstrate that the optimized randomized controller is highly effective in controlling all members of the ensemble.

The control problem we aim to address is to maximize the fidelity between the actual state $\rho_t$ and the target state $\rho_d$ for all ensemble members. Here, $\rho_t$ represents the density operator describing the systems at time $t$. By applying our algorithm ({\ref{alg:cap}) to this ensemble of quantum systems, we showcase its efficiency and versatility in controlling complex quantum ensembles. This highlights the potential of our algorithm to tackle real-world quantum control challenges.

In the following, we present the simulation results for the control of the quantum ensemble using our proposed Gaussian controller. The simulations demonstrate the robustness and effectiveness of the controller in driving the ensemble members towards the desired target state. Furthermore, we will discuss the implications of our results and the potential applications of our algorithm in quantum control and quantum information processing tasks.

\subsubsection{spin-1/2 ensemble}
Consider a spin-$1/2$ system interacting with an electric field $u_t$ \cite{spin}. The Hamiltonian describing this interaction is given by:
\begin{equation}\label{58_}
H=H_0+{u}_{t}H_1=\dfrac{1}{2}\sigma_3+\dfrac{{u}_{t}}{2}(\sigma_1+\sigma_2),
\end{equation}
where $\sigma_1$, $\sigma_2$, and $\sigma_3$ are the Pauli operators given in the basis ${\ket{1},\ket{0}}$ by the Pauli matrices,
\begin{equation}\label{PauliMat}
\sigma_1 =\left(
\begin{array}{cc}
0&1\\
1 & 0 
\end{array}
\right),\hspace{0.4cm} \sigma_2=\left(
\begin{array}{cc}
0 &-i\\
i & 0 
\end{array}
\right),\hspace{0.4cm}\sigma_3 =\left(
\begin{array}{cc}
1 &0\\
0 & -1
\end{array}
\right).
\end{equation}
Furthermore, we consider that the system interacts with an environment, e.g., a surrounding vacuum state. The master equation describing this interaction is given by:
\begin{equation}
\dfrac{d\rho(t)}{d t} =-i[H,\rho(t)]+\Theta\bigg(\sigma_{-}\rho(t)\sigma_{+}-\dfrac{1}{2}{\sigma_{+}\sigma_{-},\rho(t)}\bigg), \hspace*{0,2cm} \rho(0)=\rho_0=\ket{0}\bra{0},
\end{equation}
where $\sigma_{\pm}=\dfrac{\sigma_{1}\pm i\sigma_{2}}{2}$ and $\Theta$ describes the coupling between the system and the environment. 

In the optimization phase, the control objective is to transform the system state from its initial value $x_0=(0,1,0,0)^T$ which corresponds to the first eigenstate of the free Hamiltonian, $\ket{0}=(0,1)^T$, to the desired state $x_d=(1,0,0,0)^T$ which corresponds to the second eigenstate of the free Hamiltonian $\ket{1} = (1,0)^T$, using the proposed probabilistic quantum control approach. Consequently, the target operator is taken to be $D=[1\hspace{0.2cm}0 \hspace{0.2cm}0\hspace{0.2cm}0]$, which is equivalent to $D=(\text{vec}(\Pi_1))^T$, where $\Pi_1=\ket{1}\bra{1}$. The matrix elements of the $\tilde{A}$ and $\tilde{N}$ matrices are provided in Appendix (\ref{spinn_1_2}). In the optimization phase of Algorithm (\ref{alg:cap}), we consider $x_e=(0,0,0,0)^T$, $\Delta{t}=2.5\times10^{-6}$, $G_r=0.00001$, $\Omega=10$, and $\Theta=0.1$, and the system is evolved until a sufficient fidelity between the actual state and the target state is achieved. 

Following the optimization phase, the obtained control signal is used to control one of the members of the spin-1/2 system.  Figure (\ref{fig:sub-first}) illustrates the obtained time evolution of the populations of the two states $\ket{0} \bra{0}$ and $\ket{1}\bra{1}$ of the selected member, and Figure (\ref{fig:sub-second}) displays the time evolution of the optimal control signal responsible for achieving the control objective. We can clearly see that the population of the state $\ket{1}\bra{1}$ has reached its desired value, i.e., $o_d=1$, and it is maintained at that value for extended periods. This indicates that the optimal electric field is efficient in preventing the environmentally induced relaxation of the system to the ground state $\ket{0} \bra{0}$, signifying that the optimal control can counteract the natural effect of the environment, i.e., relaxation.

Furthermore, the optimized randomized controller obtained from the optimization phase is applied to a sample of ensemble members, each corresponding to different realizations of the state time evolution generated through the noise term $\eta(x_{t-1})$. This experiment allows us to assess the performance of the controller on a diverse set of members, each representing a unique noise-driven evolution. Figure (\ref{performance}) displays the fidelities between the target state $\ket{1}\bra{1}$ and the states of $1000$ ensemble members interacting with the optimal control electric field. From this figure, we can clearly see that the fidelities for the state transition of all tested 1000 members lie in the interval of $[0.9932,1]$ with a mean value equal to $0.9945$. This demonstrates that the algorithm exhibits exceptional performance in this specific example. It is worth mentioning that we have considered members which are interacting with their external environment. This implies that the optimal electric field is capable of driving the ensemble members to the desired state and preventing their relaxation, even in the presence of environmental effects. The results underscore the robustness and effectiveness of our algorithm in addressing real-world quantum control challenges that involve noise and environmental interactions.  
\begin{figure}[t!]
	\begin{subfigure}{.5\textwidth}
		\centering
		\includegraphics[width=.8\linewidth]{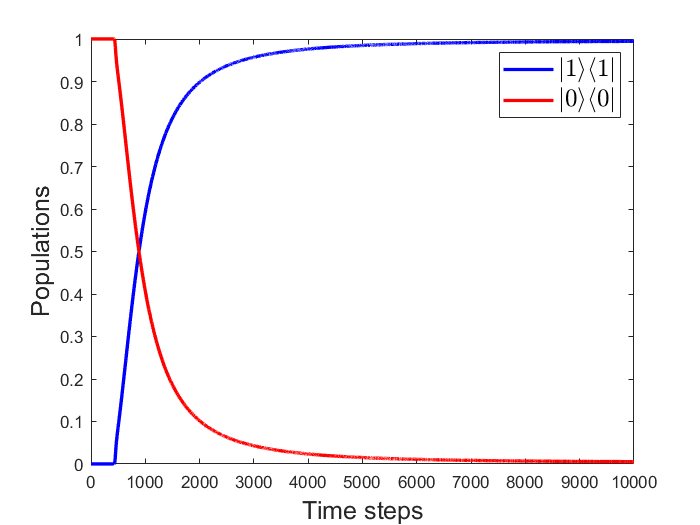}  
		\caption{}
		\label{fig:sub-first}
	\end{subfigure}
	\begin{subfigure}{.5\textwidth}
		\centering
		\includegraphics[width=.8\linewidth]{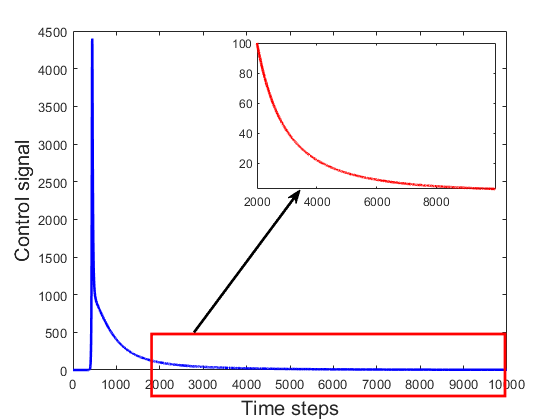}  
		\caption{}
		\label{fig:sub-second}
	\end{subfigure}
	\caption{(\ref{fig:sub-first} ): Time evolution of the populations  of the $\ket{0}\bra{0}$ and $\ket{1}\bra{1}$ states. (\ref{fig:sub-second}): Time evolution of  the control signal  $u_t$ responsible of achieving the control objective.}
	\label{figure_1}
\end{figure}
\begin{figure}[h!]

	\centering
	\includegraphics[width=.5\linewidth]{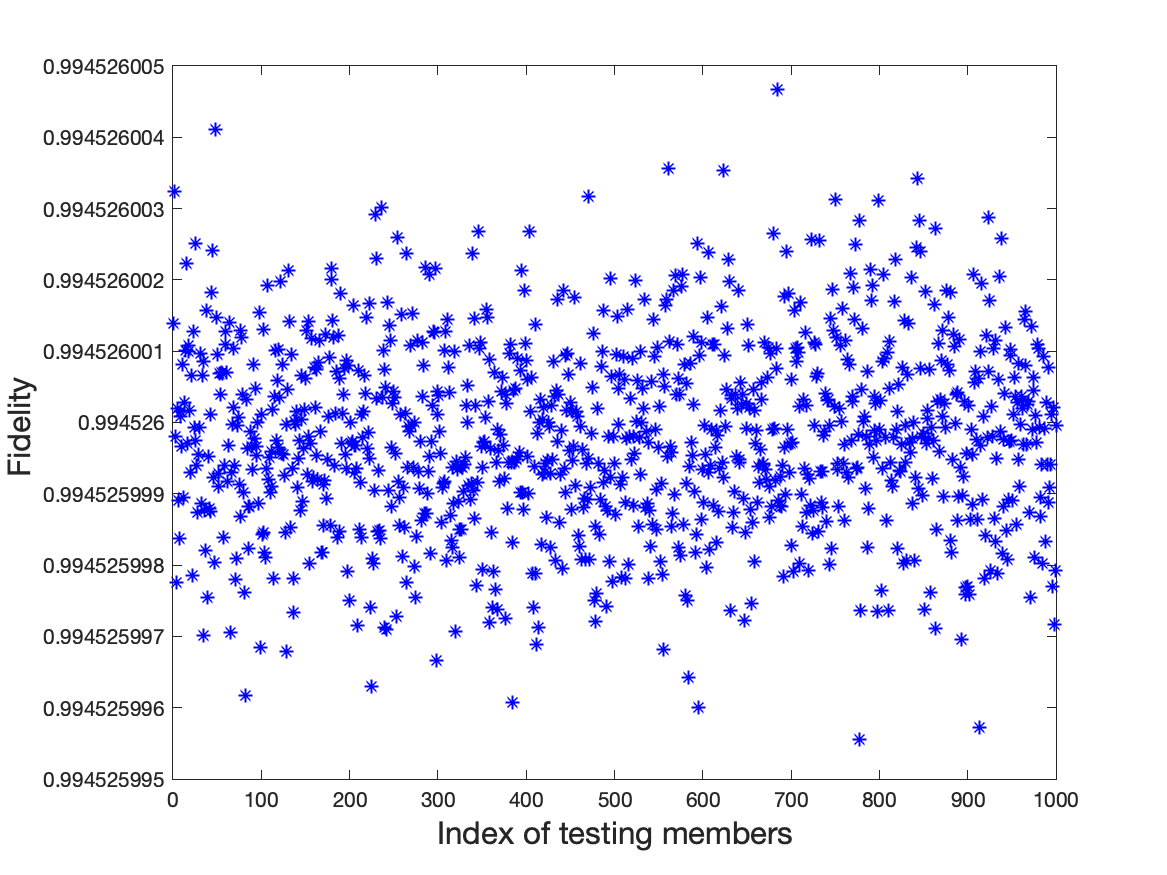}
	\caption{ The testing performance of the  optimal control for spin-$1/2$ ensemble. The samples are randomly generated through the noise term $\eta(x_{t-1})$. }
			\label{performance}  
\end{figure}
\subsubsection{$\Lambda$-type atomic ensemle}\label{atomicsystem}
\begin{figure}[t!]
	\begin{subfigure}{.5\textwidth}
		\centering
		\includegraphics[width=.8\linewidth]{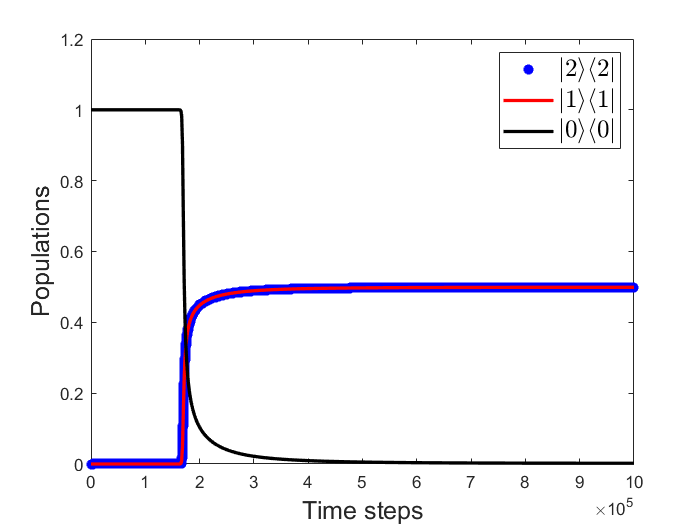}  
		\caption{}
		\label{figure4}
	\end{subfigure}
	\begin{subfigure}{.5\textwidth}
		\centering
		\includegraphics[width=.8\linewidth]{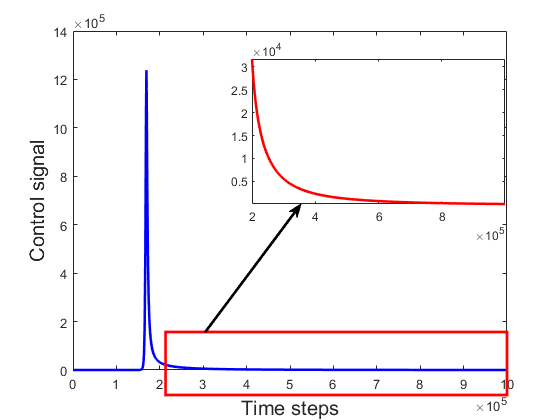}  
		\caption{}
		\label{figure5}
	\end{subfigure}
	\caption{(\ref{figure4} ):  Time evolution of the populations the $\ket{0}\bra{0}$,  $\ket{1}\bra{1}$ and $\ket{2}\bra{2}$ states. (\ref{figure5}): Time evolution of  the control signal,  $u_t$ responsible of achieving the control objective.}
	\label{figure_3}
\end{figure}
\begin{figure}[h!]
	
	\centering
	\includegraphics[width=.5\linewidth]{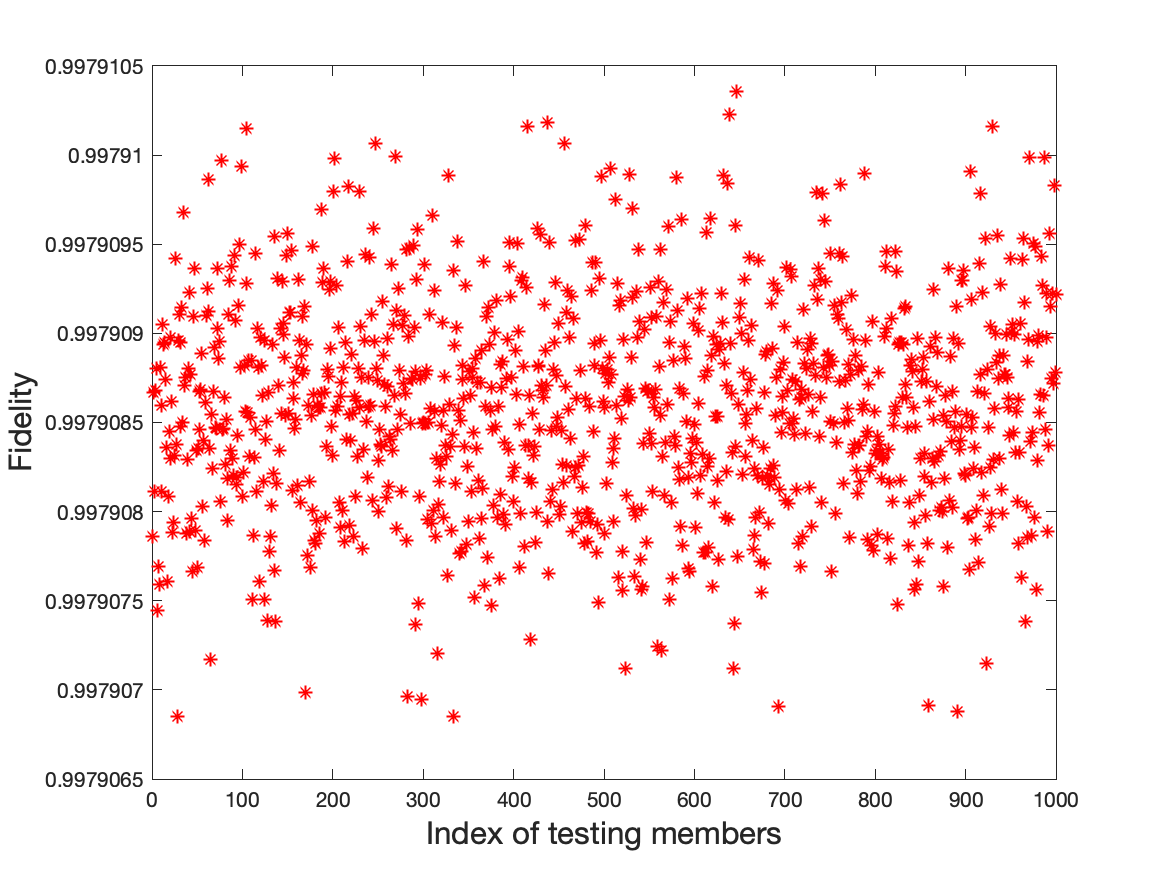}
	\caption{ The testing performance of the  optimal control for three level systems ensemble. The samples are randomly generated through the noise term $\eta(x_{t-1})$. }
		\label{performance2}  
\end{figure}

We now investigate a $\Lambda$-type atomic system \cite{spin} prepared in an initial state and interacts with an electric field ${u}_{t}$. The Hamiltonian describing this interaction is represented by matrices in the basis ${\ket{2},\ket{1},\ket{0}}$ as follows:

\begin{equation}
H=H_0+{u}_{t}H_1=\left(\begin{array}{ccc}
\dfrac{3}{2} & 0 & 0 \\
0 & 1 & 0 \\
0 & 0&0 \\
\end{array}\right)+\left(\begin{array}{ccc}
0 & 0 & 1 \\
0 & 0 & 1 \\
1 & 1&0 \\
\end{array}\right) {u}_{t}.
\end{equation}
Moreover, we assume that the system interacts with an environment, with the system-environment coupling being described by a single Lindblad operator, $L_{02}=\sqrt{\Theta}\ket{0}\bra{2}$, where $\Theta\equiv\Theta_{2\to 0}$ represents the dissipative transition rate from Hamiltonian eigenstate $\ket{2}$ to the Hamiltonian ground state $\ket{0}$.

The control objective aims to transfer the system from state $\ket{0}\bra{0}$ to the desired state $\rho_d=\ket{\psi_d}\bra{\psi_d}$, with $\ket{\psi_d}=\dfrac{1}{\sqrt{2}}(\ket{1}+\ket{2})$. To achieve this, we must maximize the fidelity between the actual state and the target state $\rho_d$. The target operator $D$ in (\ref{outp}) now takes the form $D=(\text{vec}(\Pi_d)^T)^T$, where $\Pi_d=\ket{\psi_d}\hspace{0.05cm}\bra{\psi_d}$ represents the projector onto the target state $\ket{\psi_d}$. By focusing on this control objective, we can ensure a high degree of precision in the transfer of the system between states, even when accounting for the environmental interactions. This approach highlights the versatility of the probabilistic quantum control method in effectively managing different quantum systems and control objectives.

The elements of the $\tilde{A}$ and $\tilde{N}$ matrices are provided in Appendix (\ref{spinn_1}). These matrices are then employed to compute the time-evolution of matrices $A$ and $B$ using (\ref{A_B_t}) at each time step. With these matrices and setting $x_e=(0,0,0,0,0,0,0,0,0)^T$, $\Delta{t}=0.0000000001$, $G_r=0.0000001$, $\Omega=10000$, and $\Theta=0.9$, we evaluate the matrices $M_t$ and $P_t$ defined in equations (\ref{Mt}) and (\ref{Pt}), respectively, at each instant of time, as discussed in the optimization phase of Algorithm (\ref{alg:cap}). The evolution of $u_t$ is then obtained by substituting the calculated $M_t$ and $P_t$ matrices into (\ref{optimalcapp}). We repeat these steps until the fidelity between the actual state and the target state reaches its maximum value of unity, signifying that the system's state has reached the desired target state.

Following the optimization phase, the obtained optimal randomized controller is used to control one of the members of the $\Lambda-$type system. Figure (\ref{figure4}) illustrates the time evolution of the populations of states $\ket{0}\bra{0}$, $\ket{1}\bra{1}$, and $\ket{2}\bra{2}$ of the selected ensemble  interacting with the calculated electric field. The time evolution of the obtained optimal control signal is depicted in Figure (\ref{figure5}). It is evident that the optimal control objective is achieved after only a few steps. As a result, the optimal control is capable of preserving quantum coherence, i.e., the quantum superposition, under the influence of decoherence.

To further demonstrate the capability of the designed randomised controller to effectively control the members of the $\Lambda-$type system,  the optimal controller is applied to other randomly selected ensemble members generated through the noise term $\eta(x_{t-1})$. These ensemble members represent different realizations of the state time evolution influenced by the noise term $\eta(x_{t-1})$. Figure (\ref{performance2}) displays the fidelities between the target state and the final states of $1000$ ensemble members interacting with the optimal controller. We observe that the fidelities are almost equal to 1, implying that all members successfully reached the target state $\ket{\psi_d}$. This result highlights the potential effectiveness of the proposed method in controlling $\Lambda$-type atomic systems under environmental effects.

\section{Final comments}\label{conclu}
 In this paper, we have presented a novel approach for deriving optimal randomized controllers for quantum ensembles. Our proposed method consists of two main phases: optimization and testing. The optimization phase involves constructing a control for a generalized system of the ensemble using a probabilistic approach, which is based on describing the system dynamics with probability density functions and minimizing the distance between the actual closed-loop joint pdf and the desired joint pdf. We first established a general control solution for quantum systems described by arbitrary pdfs. The solution is then demonstrated on a specific case involving quantum systems affected by Gaussian noise and described by Gaussian pdfs, which yielded an analytical form for the randomized controller.

In the testing phase, the controller obtained during the optimization phase is applied to a large number of randomly selected ensemble members. Our results demonstrated that, with the optimized controller, all members of the quantum ensemble that were also affected by environmental noises and interactions were successfully steered to the same target state, illustrating the effectiveness of our proposed methodology. In practical applications, the verification of successful steering to the target state would involve performing quantum state tomography on the final state of the system, and calculating the fidelity between the actual state and the desired state. Moreover, it's important to note a crucial aspect of quantum mechanics: the act of measurement itself can influence the state of the quantum system, a phenomenon known as measurement backaction. While this effect was not explicitly accounted for in the current work, it's an important consideration as it can potentially impact the effectiveness of the control field. Future work could explore integrating the effect of measurement backaction into the control methodology to ensure robust control even in the presence of frequent measurements.

In practical settings, the approach entails deriving the control field using historical, simulated, or real-time data obtained from quantum systems with the introduced method, and subsequently applying the electric field to the quantum ensemble.

The main advantages of the proposed methodology include:
\begin{itemize}
\item It provides a general form for the time evolution of the controller for systems affected by arbitrary noise, making it suitable for a wide range of quantum systems.

\item It is easy to implement numerically, as explained in Algorithm (\ref{alg:cap}), and does not require significant computational resources.

\item The method is highly adaptable and can be applied to various quantum systems and control objectives, making it appealing to physicists and experimentalists working with quantum systems.
\end{itemize}

Our approach allows for the efficient control and manipulation of quantum systems, even in the presence of environmental noises and interactions. This can be particularly useful for preserving quantum coherence and achieving specific control objectives in various experimental settings.

Possible extensions of this work include considering quantum systems with arbitrary forms of noise affecting their evolutions. This would further expand the applicability and versatility of our proposed method, enabling it to address a broader range of quantum systems and control challenges. 

%
%
%
%

\section*{Acknowledgements:} This work was supported by the EPSRC grant EP/V048074/1.
\section*{Appendices}

\appendix
\renewcommand{\theequation}{A.\arabic{equation}}
\setcounter{equation}{0} 
\section{Vectorization of the density operator}\label{vect_app}
In equation (\ref{LVN_eq}), we consider a density matrix $\rho(t)\in\mathbf{C}^{l \times l}$ given by:
\begin{equation}
	\rho(t)=(\rho(t))^\dagger=\left(
	\begin{array}{ccccccccc}
		\rho_{0,0}(t)&\rho_{0,1}(t) & \dots &\rho_{0,l-1}(t) \\
		\rho_{1,0}(t)&\rho_{1,1}(t) & \dots &\rho_{1,l-1}(t) \\
		\vdots&\vdots & \ddots &\vdots \\
		\rho_{l-1,0}(t)&\rho_{l-1,1}(t) & \dots &\rho_{l-1,l-1}(t) \\
	\end{array}
	\right).\end{equation}
The vectorization of $\rho(t)$ used in (\ref{NLVN_pa}) is:
\begin{align}\label{x_t_vet}
	\tilde{x}(t)&=\text{vec}(\rho(t)) \nonumber \\&=\begin{bmatrix}
		\rho_{0,0}(t) & \rho_{1,1}(t) 
		\dots
		\rho_{l-1,l-1}(t)& \rho_{0,1}(t) \dots  
		\rho_{0,l-1}(t)&\rho_{1,0}(t) 
		\dots\rho_{l-1,0}(t) 
		\dots\dots\dots\rho_{l-2,l-1}(t)&\rho_{l-1,l-2}(t) 
	\end{bmatrix}^T,
\end{align}
where $T$ stands for the transpose operation.
Thus, for $l=3$ the vectorization of the density matrix $\rho(t)	=\left(
\begin{array}{ccc}
	\rho_{00}(t)&\rho_{01}(t) & \rho_{02}(t)\\
	\rho_{10}(t) & \rho_{11}(t) & \rho_{12}(t) \\
	\rho_{20}(t) & \rho_{21}(t) & \rho_{22}(t) 
\end{array}\right)$ is trivially given by:
\begin{equation}
\tilde{	x}(t)=\tilde{	x}_t=
	\begin{bmatrix}
		\rho_{00}(t) & \rho_{11}(t) & \rho_{22}(t) &
		\rho_{01} (t) &  \rho_{02}(t)  & \rho_{10} (t) &  \rho_{20}(t)  & \rho_{12}(t) &  \rho_{21}(t) \end{bmatrix}^T.
\end{equation}

\section{Complex Gaussian  and Gaussian distributions}\label{app_C}
\setcounter{equation}{0} 
\renewcommand{\theequation}{B.\arabic{equation}}
For nonsingular covariance matrix  $\Gamma_t$, the complex Gaussian distribution of a complex random variable $x_t\in \mathbf{C}^n$ is given by:
\begin{align}
	&\mathcal{N}_{\mathcal{C}}(\mu_{t},{\Gamma_t})=\dfrac{1}{\pi^n|\Gamma_t|} \exp\bigg[- {({{x}_t}- {{\mu} _t})^{\dagger}}{\Gamma_t}^{-1}({x_t} - {\mu _t})\bigg],
\end{align}
where $\mu_t=E(x_t)$, and $\Gamma_t=\text{E}((x_t-\mu_t)(x_t-\mu_t)^\dagger)$, with $\text{E}(x_t)$ being the expected value of $x_t$, $|\Gamma_t|$ is the determinant of $\Gamma_t$, and $\dagger$ stands for the complex conjugate transpose operation.\\
The  Gaussian distribution for a real random variable $o_t\in \mathbf{R}^n$ is given by:
\begin{align}\label{NGaussian}
	\mathcal{N}(o_{m},{G})&=\dfrac{1}{\sqrt{(2\pi)^n|G|}} \exp\bigg[- {0.5({{o}_t}- {{o} _m})^{T}}{G}^{-1}({o_t} - {o _m})\bigg].
\end{align}
where $o_m=E(o_t)$ and  $G=\text{E}((o_t-o_m)(o_t-o_m)^T)$ are respectively the mean and covariance matrices, and $T$ stands for the transpose operation.  Since $o_t$ belongs to the set of real numbers $\mathbf{R}^n$, the Gaussian distribution (\ref{NGaussian}) can be equivalently restructured as:
\begin{align}
\mathcal{N}(o_{m},{G})&=\dfrac{1}{\sqrt{(2\pi)^n|G|}} \exp\bigg[- {0.5({{o}_t}- {{o} _m})^{\dagger}}{G}^{-1}({o_t} - {o _m})\bigg].
\end{align}
This recasting of the real Gaussian distribution into the form of a complex Gaussian distribution provides a consistency in the derivation of the randomized controller. It not only simplifies the presentation but also aligns well with the inherent nature of quantum systems, where the state variable is complex. Thus, the adapted representation facilitates a more coherent theoretical framework for our probabilistic quantum control approach.
\section{Evaluation of the optimal cost-to-go, $\gamma(x_{t-1})$}\label{AppA}
\setcounter{equation}{0} 
\renewcommand{\theequation}{C.\arabic{equation}}
In this appendix we show how to obtain the optimal cost-to-go function (\ref{49}) stated in Theorem (\ref{Theo1}). For this purpose, we first evaluate the coefficient $\beta(u_{t-1},x_{t-1})$ defined in equation (\ref{beta}), repeated here:
	\begin{align}\label{betaapp1}
		\beta(u_{t-1},x_{t-1}) = \int s(x_{t}|u_{t-1}, x_{t-1}) s(o_t| x_{t}) \times\ln \bigg ( \frac{ s(o_t| x_{t})}{^Is(o_t| x_{t})} \dfrac{1}{ {\gamma}(x_{t})}\bigg) \mathrm{d} x_{t}\mathrm{d} o_t.
	\end{align}
Using  equations (\ref{eq:eq5}),  (\ref{eq:eq2}) and (\ref{49}) we evaluate:
	\begin{align}\label{betaapp2}
		& \ln \bigg ( \frac{ s(o_t| x_{t})}{^Is(o_t| x_{t})} \dfrac{1}{ {\gamma}(x_{t})}\bigg)\nonumber\\
		& 
		=- 0.5{({o_t} - D{x_t})^{\dagger}}{G ^{ - 1}}({o_t} - D{x_t})+ 0.5{({o_t} - {{o}_{d}})^{\dagger}}G_r^{ - 1}({o_t} - {{o}_{d}}) +0.5x_{t}^{\dagger}{M_{t}}{x_{t}} +0.5P_{t}{x_{t}} + 0.5{\omega_{t}}+0.5\ln\bigg(\dfrac{|G_r|}{|G|}\bigg),	\end{align}
which implies that:
\begin{align}\label{betaapp2_}
		& \ln \bigg ( \frac{ s(o_t| x_{t})}{^Is(o_t| x_{t})} \dfrac{1}{ {\gamma}(x_{t})}\bigg)\nonumber\\
& =  - 0.5o_t^{\dagger}{G ^{ - 1}}{o_t} - 0.5x_t^{\dagger}{D^{\dagger}}{G ^{ - 1}}D{x_t} + o_t^{\dagger}{G ^{ - 1}}D{x_t}  + 0.5o_t^{\dagger} G_r^{ - 1}{o_t}- o_t^{\dagger} G_r^{ - 1}{{o}_{d}} +0.5x_{t}^{\dagger} {M_{t}}{x_{t}} +0.5P_{t}{x_{t}} +0.5 {\omega_{t}}\nonumber\\
		& + 0.5{o}_{d}^{\dagger} G_r^{ - 1}{{o}_{d}}+0.5\ln\bigg(\dfrac{|G_r|}{|G|}\bigg)  \nonumber\\
		&=  - 0.5o_t^{\dagger} ({G ^{ - 1}} - G_r^{ - 1}){o_t} + o_t^{\dagger} ({G ^{ - 1}}D{x_t} - G_r^{ - 1}{{o}_{d}}) -0.5 x_t^{\dagger} D^{\dagger} {G ^{ - 1}}D{x_t} +0.5x_{t}^{\dagger} {M_{t}}{x_{t}} +0.5P_{t}{x_{t}} + 0.5{\omega_{t}} \nonumber\\
		&+ 0.5{o}_{d}^{\dagger} G_r^{ - 1}{{o}_{d}}+0.5\ln\bigg(\dfrac{|G_r|}{|G|}\bigg).
	\end{align}
By substituting (\ref{betaapp2_})  into (\ref{betaapp1}), and integrating over  $o_t$ we get:
	\begin{align}\label{betaapp3}
		&\beta(u_{t-1},x_{t-1})=\int s(x_{t}|u_{t-1}, x_{t-1}) s(o_t| x_{t})\bigg( - 0.5o_t^{\dagger} ({G ^{ - 1}} - G_r^{ - 1}){o_t} + o_t^{\dagger} ({G ^{ - 1}}D{x_t} - G_r^{ - 1}{{o}_{d}}) -0.5 x_t^{\dagger} D^{\dagger} {G ^{ - 1}}D{x_t}\nonumber\\
		&  +0.5x_{t}^{\dagger} {M_{t}}{x_{t}}+0.5P_{t}{x_{t}} + 0.5{\omega_{t}}+ 0.5{o}_{d}^{\dagger} G_r^{ - 1}{{o}_{d}} +0.5\ln\bigg(\dfrac{|G_r|}{|G|})\bigg) \mathrm{d} x_{t}\mathrm{d} o_t\nonumber\\
		& = \int {s\left( {{x_t}|{u_{t-1}},{\rm{ }}{x_{t - 1}}} \right){\rm{ }}} \bigg( 0.5x_t^{\dagger} {D^{\dagger} }G_r^{ - 1}D{x_t} - x_t^{\dagger} {D^{\dagger} }G_r^{ - 1}{{o}_{d}}+0.5x_{t}^{\dagger} {M_{t}}{x_{t}} +0.5P_{t}{x_{t}} + 0.5{\omega_{t}+}0.5{o}_{d}^{\dagger} G_r^{ - 1}{{o}_{d}} \nonumber\\
		&-0.5 \Tr((G^{-1}-G_r^{ - 1})G) +0.5\ln\bigg(\dfrac{|G_r|}{|G|}\bigg)\bigg)d{x_t} \nonumber\\
		& = \int {s\left( {{x_t}|{u_{t-1}},{\rm{ }}{x_{t - 1}}} \right){\rm{ }}} \bigg( 0.5x_t^{\dagger} ({D^{\dagger} }G_r^{ - 1}D+M_{t}){x_t}+ 0.5(P_{t}-2{o}_{d}^{\dagger} G_r^{ - 1}{D})x_t +0.5{\omega_{t}} + 0.5{o}_{d}^{\dagger} G_r^{ - 1}{{o}_{d}} +0.5\ln\bigg(\dfrac{|G_r|}{|G|}\bigg)  \nonumber\\
		&-0.5 \Tr((G^{-1}-G_r^{ - 1})G)\bigg)d{x_t}.
	\end{align}
Integrating over $x_t$ yields the following form of $\beta(u_{t-1},x_{t-1})$:
	\begin{align} \label{betaapp4}
		&\beta(u_{t-1},x_{t-1})= 0.5\mu_t^{\dagger} ({D^{\dagger} }G_r^{ - 1}D+M_{t}){\mu_t} +  0.5(P_{t}-2{o}_{d}^{\dagger} G_r^{ - 1}{D})\mu_t+ 0.5{\omega_{t}} + 0.5{o}_{d}^{\dagger} G_r^{ - 1}{{o}_{d}}  +0.5\ln\bigg(\dfrac{|G_r|}{|G|}\bigg)\nonumber\\
		&-0.5 \Tr((G^{-1}-G_r^{ - 1})G)+0.5\Tr(({D^{\dagger} }G_r^{ - 1}D+M_{t})\Gamma_t).
	\end{align}
Using the definition of $\Gamma_t$ given in equation (\ref{eq:InDepCov}), the last term in the above equation can be evaluated to give:
\begin{align}
\Tr(({D^{\dagger} }G_r^{ - 1}D+M_{t})\Gamma_t)&=\Tr(({D^{\dagger} }G_r^{ - 1}D+M_{t}) A{x}_{t-1} \Sigma x_{t-1}^{\dagger}{A}^{\dagger}), \nonumber \\
&=<\text{cyclic permutation}> \nonumber \\
& \Tr({A}^{\dagger}({D^{\dagger} }G_r^{ - 1}D+M_{t}) A{x}_{t-1} \Sigma x_{t-1}^{\dagger}), \nonumber \\
&=<\text{identity $\Tr(Q z z^{\dagger})= z^{\dagger} Q z$}> \nonumber \\
& x_{t-1}^{\dagger} {A}^{\dagger}({D^{\dagger} }G_r^{ - 1}D+M_{t}) A{x}_{t-1} \Sigma. 
\end{align}
Remembering that $\mu_t=Ax_{t-1}+Bu_{t-1}$, the relation (\ref{betaapp4}) becomes:
	\begin{align}\label{betaapp5}
	&\beta(u_{t-1},x_{t-1})= 0.5(Ax_{t-1}+Bu_{t-1})^{\dagger} ({D^{\dagger} }G_r^{ - 1}D+M_{t}){(Ax_{t-1}+Bu_{t-1})} +0.5(P_{t}-2{{o}_{d}^{\dagger} }G_r^{ - 1}D)  (Ax_{t-1}+Bu_{t-1}) \nonumber\\& 
+	0.5{\omega_{t}} + 0.5{o}_{d}^{\dagger} G_r^{ - 1}{{o}_{d}}  +0.5\ln\bigg(\dfrac{|G_r|}{|G|}\bigg)-0.5 \Tr((G^{-1}-G_r^{ - 1})G)+0.5x_{t-1}^{\dagger}{A}^{\dagger}({D^{\dagger} }G_r^{ - 1}D+M_{t}){\Sigma}A{x}_{t-1}
	\nonumber\\	&= 0.5x_{t-1}^{\dagger} A^{\dagger} ({D^{\dagger} }G_r^{ - 1}D+M_{t}){Ax_{t-1}} +0.5(P_{t}-2{{o}_{d}^{\dagger} }G_r^{ - 1}D) Ax_{t-1} + 0.5 u_{t-1}^{\dagger} B^{\dagger}({D^{\dagger} }G_r^{ - 1}D+M_{t}){Bu_{t-1}} \nonumber\\
	&+ u_{t-1}^{\dagger} B^{\dagger} ({D^{\dagger} }G_r^{ - 1}D+M_{t}){Ax_{t-1}}+0.5(P_{t}-2{{o}_{d}^{\dagger} }G_r^{ - 1}D) B u_{t-1} + 
	0.5{\omega_{t}} + 0.5{o}_{d}^{\dagger} G_r^{ - 1}{{o}_{d}}  +0.5\ln\bigg(\dfrac{|G_r|}{|G|}\bigg)\nonumber\\&-0.5 \Tr((G^{-1}-G_r^{ - 1})G)+0.5x_{t-1}^{\dagger}{A}^{\dagger}({D^{\dagger} }G_r^{ - 1}D+M_{t}){\Sigma }A{x}_{t-1}
\end{align}
By substituting the form of $\beta$ found in (\ref{betaapp5}) and the ideal distribution of the controller provided in equation (\ref{eq:eq3}), into the definition of $\gamma(x_{t-1})$ given in  (\ref{gamma2}) we find that:
	\begin{align}\label{gamma}
		&\gamma(x_{t-1}) = \int {^Ic(u_{t-1}|x_{t-1}) \exp[{-\beta(u_{t-1},x_{t-1})}] \mathrm{d} u_{t-1}}, \nonumber \\
		& =(2\pi )^{-1/2}|\Omega|^{-1/2}\int {\exp } \bigg[ - 0.5{({u_{t-1}} - {u_r})^{\dagger} }{\Omega ^{ - 1}}({u_{t-1}} - {u_r}) -0.5x_{t-1}^{\dagger} A^{\dagger} ({D^{\dagger} }G_r^{ - 1}D+M_{t}){Ax_{t-1}} \nonumber\\
		& -0.5(P_{t}-2{{o}_{d}^{\dagger} }G_r^{ - 1}D) Ax_{t-1} - 0.5u_{t-1}^{\dagger} B^{\dagger} ({D^{\dagger} }G_r^{ - 1}D+M_{t}){Bu_{t-1}}-u_{t-1}^{\dagger} B^{\dagger} ({D^{\dagger} }G_r^{ - 1}D+M_{t}){Ax_{t-1}}\nonumber\\& -0.5(P_{t}-2{{o}_{d}^{\dagger} }G_r^{ - 1}D) B u_{t-1}  - 0.5{\omega_{t}} - 0.5{o}_{d}^{\dagger} G_r^{ - 1}{{o}_{d}}  -0.5\ln\bigg(\dfrac{|G_r|}{|G|}\bigg)+0.5 \Tr((G^{-1}-G_r^{ - 1})G) \nonumber \\ &-0.5x_{t-1}^{\dagger}{A}^{\dagger}({D^{\dagger} }G_r^{ - 1}D+M_{t}){\Sigma }A{x}_{t-1}\bigg] {\rm{d}}{u_{t-1}} \nonumber\\
		&=(2\pi)^{-1/2}|\Omega|^{-1/2} {\exp } \bigg(-0.5x_{t-1}^{\dagger} A^{\dagger} ({D^{\dagger} }G_r^{ - 1}D+M_{t}){Ax_{t-1}} -0.5(P_{t}-2{{o}_{d}^{\dagger} }G_r^{ - 1}D) Ax_{t-1} 
		- 0.5{\omega_{t}}- 0.5{o}_{d}^{\dagger} G_r^{ - 1}{{o}_{d}}  \nonumber\\& -0.5\ln\bigg(\dfrac{|G_r|}{|G|}\bigg)+0.5 \Tr((G^{-1}-G_r^{ - 1})G)-0.5x_{t-1}^{\dagger}{A}^{\dagger}({D^{\dagger} }G_r^{ - 1}D+M_{t}){\Sigma }A{x}_{t-1}
			-0.5u_r^{\dagger} \Omega^{-1}u_r\bigg)\nonumber\\
		&\int {\exp } \bigg(-0.5u_{t-1}^{\dagger} (\Omega^{-1}+B^{\dagger} ({D^{\dagger} }G_r^{ - 1}D+M_{t})B){u_{t-1}}+u_{t-1}^{\dagger} \big(\Omega^{-1}u_r-B^{\dagger} ({D^{\dagger} }G_r^{ - 1}D+M_{t}){Ax_{t-1}}\nonumber\\
		&-0.5B^{\dagger} (P_{t}-2{{o}_{d}^{\dagger} }G_r^{ - 1}D)^{\dagger} \big) \bigg) {\rm{d}}{u_{t-1}} 
			\end{align} 
By completing the square with respect to $u_{t-1}$ in the above equation, and using the solution of the general multiple integral given in Theorem (10.5.1) in Ref \cite{Graybill}, it follows that:
	\begin{align}\label{gamma_1}
&\gamma(x_{t-1})= {\exp } \bigg(-0.5x_{t-1}^{\dagger} A^{\dagger} ({D^{\dagger} }G_r^{ - 1}D+M_{t}){Ax_{t-1}} -0.5(P_{t}-2{{o}_{d}^{\dagger} }G_r^{ - 1}D) Ax_{t-1} - 0.5{\omega_{t}}- 0.5{o}_{d}^{\dagger} G_r^{ - 1}{{o}_{d}}  \nonumber\\& -0.5\ln\bigg(\dfrac{|G_r|}{|G|}\bigg)+0.5 \Tr((G^{-1}-G_r^{ - 1}))-0.5x_{t-1}^{\dagger}{A}^{\dagger}({D^{\dagger} }G_r^{ - 1}D+M_{t}){\Sigma }A{x}_{t-1}
-0.5u_r^{\dagger} \Omega^{-1}u_r\bigg) \nonumber\\
&\times{\exp }\bigg(0.5 \big(\Omega^{-1}u_r-B^{\dagger} ({D^{\dagger} }G_r^{ - 1}D+M_{t}){Ax_{t-1}}-0.5B^{\dagger} (P_{t}-2{{o}_{d}^{\dagger} }G_r^{ - 1}D)^{\dagger} \big)^{\dagger} \nonumber \\ 
&\big(\Omega^{-1}+B^{\dagger} ({D^{\dagger} }G_r^{ - 1}D+M_{t})B\big)^{-1} 
\big(\Omega^{-1}u_r-B^{\dagger} ({D^{\dagger} }G_r^{ - 1}D+M_{t}){Ax_{t-1}}-0.5B^{\dagger} (P_{t}-2{{o}_{d}^{\dagger} }G_r^{ - 1}D)^{\dagger} \big) \nonumber \\&
 \bigg)\times{|\Omega|^{-1/2}}|\Omega^{-1}+B^{\dagger} ({D^{\dagger} }G_r^{ - 1}D+M_{t})B|^{-1/2}
	\end{align}
Finally the form provided in (\ref{49}) in Theorem (\ref{Theo1}) is found:
	\begin{align}\label{appendb2}
		&-\ln{(\gamma(x_{t-1}))}\nonumber\\
		&=  0.5x_{t-1}^{\dagger} \bigg(A^{\dagger} (D^{\dagger} G_r^{-1}D+M_{t})A-A^{\dagger} (D^{\dagger} G_r^{-1}D+M_{t})^{\dagger} B
		\big(\Omega^{-1}+B^{\dagger} ({D^{\dagger} }G_r^{ - 1}D+M_{t})B\big)^{-1}
		\nonumber\\
	&	B^{\dagger} (D^{\dagger} G_r^{-1}D+M_{t})A+{A}^{\dagger}({D^{\dagger} }G_r^{ - 1}D+M_{t}){\Sigma }A\bigg)x_{t-1}
	\nonumber\\
&	+0.5\bigg((P_{t}-2{o}_{d}^{\dagger} G_r^{-1}D)A+2(\Omega^{-1}u_r-0.5B^{\dagger} (P_{t}^{\dagger} -2D^{\dagger} G_r^{-1}{o}_{d}))^{\dagger} 
		\big(\Omega^{-1}+B^{\dagger} ({D^{\dagger} }G_r^{ - 1}D+M_{t})B\big)^{-1}\nonumber\\
		&B^{\dagger} (D^{\dagger} G_r^{-1}D+M_{t})A
	\bigg)	x_{t-1}	\nonumber\\
	&+0.5\bigg({\omega_{t}}+ {o}_{d}^{\dagger} G_r^{ - 1}{{o}_{d}}   +\ln\bigg(\dfrac{|G_r|}{|G|}\bigg)- \Tr\big((G^{-1}-G_r^{ - 1})G\big)
	+u_r^{\dagger} \Omega^{-1}u_r	\nonumber\\
	&-\big( \Omega^{-1}u_r-0.5B^{\dagger} (P_{t}^{\dagger} -2{D}^{\dagger}G_r^{ - 1}{{o}_{d}})\big)^{\dagger} 
	\big(\Omega^{-1}+B^{\dagger} ({D^{\dagger} }G_r^{ - 1}D+M_{t})B\big)^{-1}	
	\big( \Omega^{-1}u_r-0.5B^{\dagger} (P_{t}^{\dagger} -2{D}^{\dagger}G_r^{ - 1}{{o}_{d}})\big)\nonumber\\
	&-2\ln({|\Omega|^{-1/2}}|\Omega^{-1}+B^{\dagger} ({D^{\dagger} }G_r^{ - 1}D+M_{t})B|^{-1/2})\bigg).
	\end{align}
\section{Calculating of the control distribution function}\label{appC}
\setcounter{equation}{0} 
\renewcommand{\theequation}{D.\arabic{equation}}
By substituting (\ref{eq:eq3}), (\ref{betaapp5}) and (\ref{appendb2}) into  (\ref{eq:Eqn4}),
it follows that the optimal control distribution is given by:
\begin{align}
 &c(u_{t-1}|x_{t-1})=(2\pi)^{-1/2}|\Omega^{-1}+B^{\dagger} ({D^{\dagger} }G_r^{ - 1}D+M_{t})B|^{1/2}\exp\bigg[-0.5u_{t-1}^{\dagger}\bigg(\Omega^{-1}+B^{\dagger} ({D^{\dagger} }G_r^{ - 1}D+M_{t})B\bigg){u_{t-1}}\nonumber\\
 &+u_{t-1}^{\dagger} \bigg(\Omega^{-1}u_r-B^{\dagger} ({D^{\dagger} }G_r^{ - 1}D+M_{t}){Ax_{t-1}}
 -0.5B^{\dagger} (P_{t}-2{{o}_{d}^{\dagger} }G_r^{ - 1}D)^{\dagger} \bigg) 
-0.5\bigg({\Omega ^{ - 1}}{u_r} - {B^{\dagger} }({D^{\dagger} }G_r^{ - 1}D + {M_t})A{x_{t -1}}\nonumber\\
& -0.5 {B^{\dagger} }( P_{t }^{\dagger}  - 2{D^{\dagger} }G_r^{ - 1}{{o}_{d}} )\bigg)^{\dagger}\bigg({\Omega ^{ - 1}} + {B^{\dagger} }( {D^{\dagger} }G_r^{ - 1}D + {M_t})B\bigg)^{-1} \bigg({\Omega ^{ - 1}}{u_r} - {B^{\dagger} }({D^{\dagger} }G_r^{ - 1}D + {M_t})A{x_{t -1}}\nonumber\\
& -0.5 {B^{\dagger} }( P_{t }^{\dagger}  - 2{D^{\dagger} }G_r^{ - 1}{{o}_{d}} )\bigg)
\bigg].
\end{align}
which can be written as:
\begin{align}
	c(u_{t-1}|x_{t-1})&=(2\pi)^{-1/2}|\Omega^{-1}+B^{\dagger} ({D^{\dagger} }G_r^{ - 1}D+M_{t})B|^{1/2}\\\nonumber& \times\exp\bigg[- {0.5({{u}_{t-1}}- {{v} _{t-1}})^{\dagger}}\bigg(\Omega^{-1}+B^{\dagger} ({D^{\dagger} }G_r^{ - 1}D+M_{t})B\bigg)({u_{t-1}} - {v _{t-1}})\bigg],
\end{align}
where,
\begin{align}\label{optimalcappc}
	v_{t-1}&=\bigg({\Omega ^{ - 1}} + {B^{\dagger} }( {D^{\dagger} }G_r^{ - 1}D + {M_t})B\bigg)^{-1} \bigg({\Omega ^{ - 1}}{u_r} - {B^{\dagger} }({D^{\dagger} }G_r^{ - 1}D + {M_t})A{x_{t -1}} -0.5 {B^{\dagger} }( P_{t }^{\dagger}  - 2{D^{\dagger} }G_r^{ - 1}{{o}_{d}} )\bigg),
\end{align}
This means that the pdf of the controller follows a normal distribution, as given by:
\begin{equation}
c(u_{t-1}|x_{t-1})\sim \mathcal{N}(v_{t-1},R_{t}),
\end{equation}
where
\begin{align}
	R_t&=\bigg({\Omega ^{ - 1}} + {B^{\dagger} }( {D^{\dagger} }G_r^{ - 1}D + {M_t})B\bigg)^{-1}.
\end{align}
	\section{State space model for spin $\frac{1}{2}$}\label{spinn_1_2}
	\setcounter{equation}{0} 
\renewcommand{\theequation}{E.\arabic{equation}}
	The evolution of a  spin-1/2 system in interaction with an external environment  can be described by the following master equation:
	\begin{equation}\label{vonspin}
		\dfrac{d\rho(t)}{dt}=-i[H,\rho(t)]+\Theta\bigg(\sigma_{-}\rho(t)\sigma_{+}-\dfrac{1}{2}\{\sigma_{+}\sigma_{-},\rho(t)\}\bigg),
	\end{equation}
where, as given in equation (\ref{58_}), $H=\dfrac{1}{2}\sigma_3+\dfrac{1}{2}(\sigma_1+\sigma_2) {u}_{t}$ such that ${u}_{t}$ is an external electric field, and $\sigma_1,\sigma_2, \sigma_3$ are  the relevant Pauli matrices and $\sigma_{\pm}=\dfrac{\sigma_{1}\pm i\sigma_{2}}{2}$. The Hamiltonian $H$ can be written in matrix form as follows:
	\begin{equation}
		H=\dfrac{1}{2}\left(
		\begin{array}{cc}
			1 & {u}_{t}(1-i) \\
			{u}_{t}(1+i) & -1 \\
		\end{array}
		\right).
	\end{equation} 
Writing the density operator in terms of its elements, the von-Neumann equation (\ref{vonspin}) can be re-written as:
	\begin{align}
		&	\dfrac{d}{dt}\left(
		\begin{array}{cc}
			\rho_{00}(t) & \rho_{01}(t)\\
		\rho^*_{01}(t)  &\rho_{11}(t)
		\end{array}
		\right)=\dfrac{-i}{2}\left(
		\begin{array}{cc}
			{u}_{t}\big((1-i) \rho^*_{01}(t)-(1+i) \rho_{01}(t)\big) &2\rho_{01}(t)+{u}_{t}(1-i)(\rho_{11}(t)-\rho_{00}(t))\\
			-2\rho^*_{01}(t) +{u}_{t}(1+i)(\rho_{00}(t) -\rho_{11}(t) ) & {u}_{t}\big((1+i)\rho_{01}(t) -(1-i)\rho^*_{01}(t) \big)  \\
		\end{array}
		\right)\\\nonumber&-\left(	\begin{array}{cc}
			\Theta\rho_{00}(t)&\dfrac{\Theta}{2}\rho_{01}(t)\\
			 \dfrac{\Theta}{2}\rho^*_{01}(t)  &-\Theta\rho_{00}(t)
		\end{array}\right)
	\end{align}
where the elements $\rho_{00}(t), \rho_{01}(t), \rho^*_{01}(t)$ and $\rho_{11}(t)$ are the elements of the density operator $\rho(t)$. Vectorizing the above equation yields:
	\begin{align}
	&	\dfrac{d}{dt}\underbrace{\left(
		\begin{array}{cc}
			\rho_{00}(t)\\ \rho_{11}(t)\\
			\rho_{01}(t) \\ \rho^*_{01}(t)
		\end{array}
		\right)}_{x(t)}= \underbrace{\left(
		\begin{array}{cccc}
				-\Theta & 0 & 0 & 0\\
	\Theta & 0 & 0& 0\\
			0 & 0 & -i-\dfrac{\Theta}{2}& 0 \\
			0 & 0 & 0& i-\dfrac{\Theta}{2}\\
		\end{array}\right)}_{\tilde{A}} \underbrace{\left(
		\begin{array}{cc}
			\rho_{00}(t)\\ \rho_{11}(t)\\
			\rho_{01}(t) \\ \rho^*_{01}(t)
		\end{array}
		\right)}_{x(t)}\\&+i \underbrace{\dfrac{ 1}{2}\left(
		\begin{array}{cccc}
			0 & 0 & (1+i) & -(1-i)\\
			0 & 0 & -(1+i) & (1-i)\\
			(1-i) & -(1-i)&0 & 0 \\
			-(1+i) & (1+i)&0 & 0 \\
		\end{array}\right)}_{\tilde{N}} \underbrace{\left(
	\begin{array}{cc}
		\rho_{00}(t)\\ \rho_{11}(t)\\
			\rho_{01}(t) \\ \rho^*_{01}(t)
	\end{array}
		\right)}_{x(t)}{u}_{t},
	\end{align}
from which we find the state equation for the spin-1/2 system to be given by:
	\begin{equation}
	\dfrac{d{x}(t)}{d t} =(\tilde{A}+i\tilde{N} {u}_{t} ){x}(t).
	\end{equation}
	\section{State space model for a three level system}\label{spinn_1}
	\setcounter{equation}{0} 
\renewcommand{\theequation}{F.\arabic{equation}}
	The Hamiltonian describing the interaction between a $\Lambda$-type atomic  system, examined in Section (\ref{atomicsystem}), and an electric field ${u}_{t}$ can be written in matrix form as follows:
	\begin{equation}
		H=H_0+{u}_{t}H_1=\left(\begin{array}{ccc}
			\dfrac{3}{2} & 0 & 0  \\
			0 & 1 & 0  \\
			0 & 0&0   \\
		\end{array}\right)+\left(\begin{array}{ccc}
			0 & 0 & 1  \\
			0 & 0 & 1  \\
			1 & 1&0   \\
		\end{array}\right) {u}_{t}=\left(\begin{array}{ccc}
			\dfrac{3}{2} & 0 & {u}_{t}  \\
			0 & 1& {u}_{t}  \\
			{u}_{t} & {u}_{t}&0   \\
		\end{array}\right)
	\end{equation} 
By assuming that the system interacts with an external environment described by one Linblad  operator, $L_{02}=\sqrt{\Theta}\ket{0}\bra{2}$, and  expanding the density operator in terms of its matrix elements the von-Neumann equation given in (\ref{LVN_eq}) can be written as:
{\small	\begin{align}
		&	i\dfrac{d}{dt}\left(\begin{array}{ccc}
			\rho_{00}(t) & 	\rho_{01}(t) & 	\rho_{02}(t)\  \\
			\rho^*_{01}(t) & 	\rho_{11}(t) & \rho_{12}(t)\ \\
			\rho^*_{02}(t) &  \rho^*_{12}(t) &	\rho_{22}(t)
		\end{array}\right)=\\\nonumber&\left(\begin{array}{ccc}
			u_t\big(\rho^*_{02}(t) -\rho_{02}(t)\big)& \dfrac{1}{2}\rho_{01}(t)+u_t\big( \rho^*_{12}(t)-\rho_{02}(t)\big ) & \dfrac{3}{2}\rho_{02}(t) +u_t\big(\rho_{22}(t)-\rho_{00}(t) -\rho_{01}(t)\big)  \\\\
			-\dfrac{1}{2} \rho^*_{01}(t)+{u}_{t}(\rho^*_{02}(t)- \rho_{12}(t))& {u}_{t}\big( \rho_{12}^*(t)- \rho_{12}(t)\big)& \rho_{12}(t)+{u}_{t}\big(\rho_{22}(t) -\rho_{01}^*(t)-\rho_{11}(t)\big) \\\\
			-\dfrac{3}{2} \rho_{02}^*(t) +{u}_{t}\big(\rho_{00} +\rho_{01}^*(t)-\rho_{22}(t)\big)& - \rho_{12}^*(t)+{u}_{t}(\rho_{01}(t)+\rho_{11}(t)-\rho_{22} (t)) & {u}_{t}\big(\rho_{02}(t) + \rho_{12}(t)-\rho_{02}^*(t) - \rho_{12}^*(t)\big) \\
		\end{array}\right)\\\nonumber&+\dfrac{i}{2}\Theta\left(\begin{array}{ccc}
			-2\rho_{00}(t) & 	-\rho_{01}(t) & -	\rho_{02}(t) \\
			-\rho_{01}^*(t)& 	0 &  0 \\
			-\rho^*_{02}(t) &  	0 &	2\rho_{00}(t)
		\end{array}\right).
\end{align}}
which when using the vectorisation defined in Eq.(\ref{x_t_vet}), gives,
\newpage
\begin{eqnarray}
	&\dfrac{d}{dt}\underbrace{\left(
		\begin{array}{ccccccccc}
			\rho_{00} (t) \\
			\rho_{11} (t)\\
			\rho_{22}(t) \\
			\rho_{01}(t) \\
			\rho_{02}(t)  \\
			\rho_{01} ^* (t) \\
			\rho_{02}^*(t)  \\
			\rho_{12}(t)\\
			\rho_{12}^* (t) \\
		\end{array}\right)}_{x(t)}=\underbrace{\left(
		\begin{array}{ccccccccc}
			-\Theta & 0 & 0 & 0 & 0 & 0 & 0 & 0 & 0 \\
			0 & 0 & 0 & 0 & 0 & 0 & 0 & 0 & 0 \\
			\Theta & 0 & 0 & 0 & 0 & 0 & 0 & 0 & 0 \\
			0 & 0 & 0 & -\dfrac{i}{2}-\dfrac{\Theta}{2} & 0 & 0 & 0 & 0 & 0 \\
			0 & 0 & 0 & 0 & -\dfrac{3i}{2}-\dfrac{\Theta}{2} & 0 & 0 & 0 & 0 \\
			0 & 0 & 0 & 0 & 0 & \dfrac{i}{2}-\dfrac{\Theta}{2}  & 0 & 0 & 0 \\
			0 & 0 & 0 & 0 & 0 & 0 & \dfrac{3i}{2}-\dfrac{\Theta}{2}  & 0 & 0 \\
			0 & 0 & 0 & 0 & 0 & 0 & 0 & -i& 0 \\
			0 & 0 & 0 & 0 & 0 & 0 & 0 & 0 & i\\
		\end{array}
		\right)}_{\tilde{A}} \underbrace{\left(
		\begin{array}{ccccccccc}
			\rho_{00} (t) \\
			\rho_{11} (t)\\
			\rho_{22}(t) \\
			\rho_{01}(t) \\
			\rho_{02}(t)  \\
			\rho_{01} ^* (t) \\
			\rho_{02}^*(t)  \\
			\rho_{12}(t)\\
			\rho_{12}^* (t) \\
		\end{array}\right)}_{x(t)} \nonumber \\
	&+i\underbrace{\left(
		\begin{array}{ccccccccc}
			0 & 0 & 0 & 0 & 1 & 0 & -1 & 0 & 0 \\
			0 & 0 & 0 & 0 & 0 & 0 & 0 & 1 & -1 \\
			0 & 0 & 0 & 0 & -1& 0 & 1 & -1 & 1 \\
			0 & 0 & 0 & 0 & 1 & 0 & 0 & 0 & -1 \\
			1 & 0 & -1 & 1 & 0 & 0 & 0 & 0 & 0 \\
			0 & 0 & 0 & 0 & 0 & 0 & -1 & 1 & 0 \\
			-1 & 0 & 1& 0 & 0 & -1 & 0 & 0 & 0 \\
			0 & 1 & -1 & 0 & 0 & 1 & 0 & 0 & 0 \\
			0 & -1 & 1 & -1 & 0 & 0 & 0 & 0 & 0 \\
		\end{array}
		\right)}_{\tilde{N}}\underbrace{\left(
		\begin{array}{ccccccccc}
			\rho_{00} (t) \\
			\rho_{11} (t)\\
			\rho_{22}(t) \\
			\rho_{01}(t) \\
			\rho_{02}(t)  \\
			\rho_{01} ^* (t) \\
			\rho_{02}^*(t)  \\
			\rho_{12}(t)\\
			\rho_{12}^* (t) \\
		\end{array}\right)}_{x(t)} {u}_{t}.
\end{eqnarray}
Hence, we find the form of the state equation  given in Eq.(\ref{NLVN_pa}) as follows,
\begin{equation}
	\dfrac{d{x}(t)}{d t} =(\tilde{A}+i\tilde{N} u(t) ){x}(t).
\end{equation}

	
\end{document}